\newtheorem{thm}{Theorem}
\newtheorem{remark}{Remark}
\theoremstyle{definition}
\newtheorem{defn}{Definition}
\newtheorem{lem}{Lemma}
\newtheorem{cor}{Corollary}
\newtheorem*{exmp}{Example}
\let\bbordermatrix\bordermatrix
\patchcmd{\bbordermatrix}{8.75}{4.75}{}{}
\patchcmd{\bbordermatrix}{\left(}{\left[}{}{}
\patchcmd{\bbordermatrix}{\right)}{\right]}{}{}
\begin{document}
\title{Dynamic Clustering and User Association in Wireless Small Cell Networks with Social Considerations}

\author{
\IEEEauthorblockN{Muhammad Ikram Ashraf\IEEEauthorrefmark{1}, Mehdi Bennis\IEEEauthorrefmark{1}\IEEEauthorrefmark{3}, Walid Saad\IEEEauthorrefmark{2}\IEEEauthorrefmark{3}, Marcos Katz\IEEEauthorrefmark{1} and Choong-Seong Hong\IEEEauthorrefmark{3}\\}
\IEEEauthorblockA{\small\IEEEauthorrefmark{1}Centre for Wireless Communications, University of Oulu, Finland, \\ email: \{ikram.ashraf,mehdi.bennis,marcos.katz\}@oulu.fi \\
\IEEEauthorrefmark{2}Wireless@VT, Bradley Department of Electrical and Computer Engineering, Virginia Tech, Blacksburg, USA, email: walids@vt.edu}\\
\IEEEauthorrefmark{3}{Department of Computer Engineering, Kyung Hee University, South Korea, email: cshong@khu.ac.kr}\vspace{-3ex}
\thanks{This research was supported by TEKES grant 2364/31/2014	and the Academy of Finland (CARMA) and
 the U.S. National Science Foundation under Grants CNS-1513697 and CNS-1460316.
}
}
\maketitle

\begin{abstract}
In this paper, a novel social network-aware user association in wireless small cell networks with underlaid device-to-device (D2D) communication is investigated. The proposed approach exploits social strategic relationships between user equipments (UEs) and their physical proximity to optimize the overall network performance. This problem is formulated as a matching game between UEs and their serving nodes (SNs) in which, an SN can be a small cell base station (SCBS) or an \emph{important UE} with D2D capabilities. The problem is cast as a many-to-one matching game in which UEs and SNs rank one another using preference relations that capture both the wireless aspects (i.e., received signal strength, traffic load, etc.) and users' social ties (e.g., UE proximity and social distance). Due to the combinatorial nature of the network-wide UE-SN matching, the problem is decomposed into a dynamic clustering problem in which SCBSs are grouped into disjoint clusters based on mutual interference. Subsequently, an UE-SN matching game is carried out per cluster. The game under consideration is shown to belong to a class of matching games with \emph{externalities} arising from interference and peer effects due to users social distance, enabling UEs and SNs to interact with one another until reaching a stable matching. Simulation results show that the proposed social-aware user association approach yields significant performance gains, reaching up to $26\%$, $24\%$, and $31\%$ for $5$-th, $50$-th and $95$-th percentiles for UE throughputs, respectively, as compared to the classical social-unaware baseline.
\end{abstract}
\vspace{-0.3cm}
\smallskip
\noindent \textbf{Keywords.} Small cell network, matching theory, offloading, D2D, user association.

\section{Introduction}
\label{sec:intro}
The proliferation of bandwidth intensive wireless applications such as multimedia streaming and online social networking (OSN) has led to a tremendous increase in wireless spectral resources \cite{Bennis2015}. This increasing need for wireless capacity mandates novel cellular architectures for delivering high quality-of-service (QoS) in a cost-effective manner. In this respect, small cell networks (SCNs), built on the premise of deploying inexpensive, low-power small cell base stations (SCBSs) are seen as a key technique to boost wireless capacity and offloading traffic. Reaping the benefits of SCNs requires overcoming a number of challenges that include user association, traffic offloading, resource management, among others \cite{Bennis2015, JA2013, DLopez2011, GdeRoche2010}. Along with the rapid proliferation of SCNs, cellular systems are moving from a base station to a user-centric architecture driven by the surge of user specific applications \cite{Bennis2013}. It is anticipated that a large number of devices with varying QoS requirements will interact within small coverage footprints \cite{Federico2014}. Hence, in conjunction with SCNs, device-to-device (D2D) communication over cellular bands has emerged as a promising technique to further improve the performance of SCNs, in which D2D devices communicate directly bypassing the infrastructure yielding increased network capacity, extended coverage, enhanced data offload and improved energy efficiency \cite{Federico2014,AQD2d2014, GED2d2012, LSongBook2014, YPei2013, JinLi2013}. The 3GPP LTE Release 12 has dealt with D2D communication in order to address the ever-increasing demands for data traffic.

The benefits of D2D communication are accompanied with a number of technical challenges that include proximity service discovery (ProSe), resource allocation, and intercell interference coordination between cellular and D2D links \cite{AQD2d2014, GED2d2012, LSongBook2014}. In particular, one key challenge in D2D-enabled SCN is that of associating user equipments (UEs) to their preferred serving node (SN) that can be either a SCBS or other D2D users. In \cite{YPei2013}, the authors present a protocol for resource allocation and selection of potential D2D SNs to improve the sum rate of D2D links. In \cite{JinLi2013}, an optimization problem is formulated enabling D2D links to improve their resource utilization and aggregate link capacity. Most of the existing works on SCNs and D2D enabled user association are focused on conventional physical layer metrics to optimize the network performance \cite{Bennis2015, JA2013, DLopez2011, GdeRoche2010, AQD2d2014, GED2d2012, JinLi2013}. To this end, one promising approach for addressing the user association problem is to incorporate additional contextual information such as user's social ties, network connectivity and other features to further boost the network performance. For example, in a football stadium, a group of neighboring friends may like to share the statistics of a player. Coupled with their physical proximity, the social networking relationships between these users can provide an indication on their common interests to share the same content. In a conventional setting, a SCBS often ends up serving different users with the same content using multiple duplicate transmissions which leads to a waste of resources and degrades the overall QoS. Social network-aware user association, as presented in this work, is a new paradigm to boost the performance of SCNs by exploiting D2D communications.

However, incorporating different contextual information in conjunction with conventional physical layer metrics enables better resource utilization and enhanced traffic offloading \cite{Dimitrio2014, Proebster2011, Magnus2012}. In \cite{Dimitrio2014}, the authors presented a radio resource management technique which incorporates multiple context information (spectrum bands, QoS, location) within the SCNs which leads to better spectrum usage. A network utility maximization problem is solved by exploring contextual information at UEs such as application's foreground/backgroud state in order to improve QoS \cite{Proebster2011}. A scheduling algorithm is developed for cellular wireless networks, which utilizes the information captured from users's environment (packet flow and delay requirement) to examine the throughput-delay tradeoff \cite{Magnus2012}. In this respect, the authors in \cite{LChen2013} propose a self-organizing cluster-based load balancing scheme for traffic offloading while, the authors in \cite{OAnjum2014} propose a decentralized coordination mechanism with a focus on cell edge users based on system level simulations. However, while interesting, these works are limited to conventional wireless systems relying on a central controller which can cause significant information exchange, and thus will not be appropriate for dense SCNs. This motivates the need for decentralized and self-organizing resource management solutions.

The main contribution of this work is to propose a novel, dynamic clustering and social-aware user association mechanism in D2D-enabled SCNs. Unlike previous works \cite{LChen2013, Abelnaseer2014, KHosseini2012}, we propose a clustering approach that incorporates both location and traffic load of SCBS and specifically, incorporates conventional channel information and social interaction between users to optimize user association in D2D enabled SCNs. In order to exploit the social-ties among nodes, we utilize the notion of \emph{social distance} to identify sets of socially important nodes acting as the best SNs for other UEs within proximity range. In the proposed model, the decision of UEs on whether to use a cellular or D2D link takes into account the social importance of the node in conjunction with the traffic load, channel conditions and interference. We formulate the problem as a many-to-one matching game per cluster with \emph{externalities} in which the serving nodes (i.e., SCBS and/or important UE) and UEs are the players, which rank one another based on set of preferences seeking suitable and stable association. The use of coalition formation games for D2D scenarios, as studied in the literature, typically seeks to maximize resource utilization and enhance network performance such as in \cite{YXiao2015} and \cite{YShen2016}. In \cite{YXiao2015}, the authors presented spectrum sharing problem as a Bayesian non-transferable utility overlapping coalition formation (BOCF) game between a set of device-to-device (D2D) links and multiple co-located networks. In \cite{YShen2016}, the authors studied D2D coalition formation among UEs in a single cell for video sharing scenario with peak signal-to-noise ratio (PSNR) as the measurement for video quality. Unlike \cite{YXiao2015} and \cite{YShen2016}, here, we consider an equally loaded multi-cell system such that spectrum is shared in the first time slot, whereas full spectrum is utilized in the subsequent time slot for D2D-transmissions. Furthermore, physically shared links with interference and social interactions among UEs are incorporated into the proposed matching game.

Many works have been presented in the literature to solve numerous matching markets in microeconomics such as \cite{ARoth1992, DFMan2013, Bando2012}. Unlike previous works \cite{ARoth1992, DFMan2013, Bando2012, EA2011, SBayat2012, ALeshem2011}, the strategy of each player in the proposed matching game is affected by the decisions of its peers. In this regard, the works in \cite{YHYang2013} and \cite{CJiang2015} deal with network externalities, however, they consider different types of optimization/game-theoretic problems and do not focus on matching games with externalities as studied here. In particular, we show that the proposed game belongs to a class of matching games with externalities (i.e., negative externalities) arising from interference and peer effects (i.e., social interaction) between nodes, which distinctly differs from the prior works presented in \cite{EA2011, SBayat2012, ALeshem2011, YHYang2013, CJiang2015}. To solve this game, we propose a distributed algorithm that allows UEs and SNs to self-organize and to maximize their own utilities within their respective clusters. In addition, the proposed algorithm is shown to converge to a stable matching in which no player has an incentive to match to other player, even in the presence of externalities. We use the concept of \emph{two-sided pairwise  matching} to prove stability, in which UEs and SNs can swap their association preference in order to maximize their utility. Simulation results validate the effectiveness of the proposed approach, and show significant performance gains compared to the baseline social-unaware user association approaches.

The rest of this paper is organized as follows. In Section \ref{sec:2}, we present the system model followed by the wireless and social network models. The social network-aware user association problem is formulated in Section \ref{sec:3}. The dynamic clustering and intra-cluster coordination is detailed in Section \ref{sec:4}. In Section \ref{sec:5}, we study the UE-SN association as a matching game with externalities and discussed its properties. Simulation results are presented in Section \ref{sec:6}. Finally, conclusions are drawn in Section \ref{sec:7}.
\section{System Model}
\label{sec:2}
\subsection{Wireless Network Model}
\label{sec:21}
Consider the \textit{downlink} transmission of a macrocell network underlaid by $N$ SCBSs. We assume that all SCBSs transmit on the same frequency spectrum (i.e., co-channel deployment) with bandwidth $B$. Let $\mathcal{N} = \{1, \ldots, N\}$, $\mathcal{M} = \{1, \ldots, M\}$, and $\mathcal{I} = \{1, \ldots, I\}$, $\mathcal{I} \subset \mathcal{M},\; \mathcal{I} \neq \mathcal{M}$, be the sets of SCBSs, UEs, and important UEs, respectively. Here, an important UE is defined as a socially well connected node within a confined coverage area serving as anchor node\footnote{The terms Important UE and anchor node are used interchangeably.} for D2D communication. We let $\mathcal{P} = \{1, \ldots, P\}$ be the set of SNs, which can be either SCBSs or important UEs, i.e., $\mathcal{P} = \mathcal{N} \cup \mathcal{I}$. We let $\mathcal{L}_{n}$ be the set of UEs serviced by SCBS $n$ and $\mathcal{M}_i$ be the set of UEs serviced by the important node $i \in \mathcal{I}$. Let $\mathcal{M}_u$ be the set of $M_u$ non-serving UEs such that, $\mathcal{M} = \mathcal{M}_u \cup \mathcal{I}$. All the symbols which are used in the rest of the paper are summarized in Table \ref{tab:symbols}. The considered network model is shown in Fig. \ref{fig:physical-model}. We assume only slowly-varying channel state information (CSI) at the SCBS \cite{VHa2014}. Moreover, in our model users are not capable of transmitting and receiving simultaneously, so half duplex UEs are considered. In the first time slot $\tau_0$, SCBS $n$ transmits to UE $\hat{m} \in \mathcal{M} \setminus \mathcal{M}_i, \forall i \in \mathcal{I}$ while in the next time slot $\tau_1$, important UE $i$ decodes and forwards its received signal to UE $m \in \mathcal{M}_i$. Thus, the achievable rate between SCBS $n \in \mathcal{N}$ and UE $\hat{m} \in \mathcal{M} \setminus \mathcal{M}_i, \forall i \in \mathcal{I}$ at time slot $\tau_0$ is given by:
\begin{table}[t]
	\footnotesize
	\caption{Summary of Important Symbols}
	\centering
	\begin{tabular}{| p{1.0cm} | p{6.5cm} |}
		\hline
		\textbf{Symbol} & \textbf{Description} \\ \hline
		$\mathcal{N}$     & Set of SCBSs in the network    \\ 
		$\mathcal{M}$     & Set of UEs in the network \\ 
		$\mathcal{I}$     & Set of important UEs  \\
		$\mathcal{M}_i$   & Set of UEs serviced by important UE $i$ \\
		$\mathcal{P}$     & Set of SNs     \\
		$\mathcal{C}$     & Set of Clusters \\
		$B$     & Bandwidth                  \\ 
		$\mathcal{L}_n$   & Set of UEs serviced by SCBS $n$ \\
		$R_{n,m}$ & Achievable rate between SCBS $n$ and UE $m$ \\
		$\widetilde{R}_{n,m}$ & Achievable rate between SCBS $n$ and D2D UE $m \in \mathcal{M}_i$\\
		$p_{n}$           & Transmit power of SCBS $n$     \\
		$h_{n,m}$ & Channel gain between SCBS $n$ and UE $m$ \\
		$\boldsymbol{S}$  & Similarity matrix  \\             
		$\boldsymbol{A}$   & Edge betweenness centrality matrix for UEs \\
		$\boldsymbol{X}$  & Weighted cost matrix for UEs \\
		$\boldsymbol{W}$   & Social distance matrix for UEs \\
		$w_{m,\tilde{m}}$ & Social distance between UEs $m$ and $\tilde{m}$ \\
		$U_{p,m}$         & Utility of UE $m$ with respect to SN $p$ \\
		$\rho_n$   & Total load of SCBS $n \in \mathcal{N}$ \\
		$U_p$             & Utility of SN $p$ \\
		$\Gamma(\eta_c)$ & Social welfare of cluster $c \in \mathcal{C}$ for given matching $\eta_c$ \\
		$s^d_{n_1,n_2}$   & Gaussian distance similarity between SCBS $n_1, n_2 \in \mathcal{N}$ \\ $s^l_{n_1,n_2}$  & Gaussian load dissimilarity between SCBS $n_1,n_2 \in \mathcal{N}$ \\
		$\boldsymbol{D}$  & Gaussian distance similarity matrix \\
		$\boldsymbol{L}$            & Gaussian load dissimilarity matrix \\
		$\boldsymbol{Y}$  & Gaussian affinity matrix \\           
		$\boldsymbol{H}$            & Degree matrix  \\
		$\Omega$          & Tunable parameter control the impact of distance and load on similarities \\
		$\succ $ & preference relation \\
		\hline
	\end{tabular}
	\label{tab:symbols}
\end{table}
\begin{equation}
\label{eq:rate-scbs-cu}
R_{n,\hat{m}}=\frac{\tau_0}{T} \cdot \frac{B}{|\mathcal{L}_n|}\cdot\log_2\left(1+\frac{p_{n} h_{n,\hat{m}}}{N_0 +\sum_{n^{\prime}\in\mathcal{N}\setminus\{n\}}p_{n^{\prime}} h_{n^{\prime},\hat{m}}}\right),
\end{equation}
\noindent
where $T$ is the time duration for a frame such that $T = \tau_0 + \tau_1$, $p_{n}$ is the transmission power of SCBS $n$, $h_{n,\hat{m}}$ is the channel gain from SCBS $n$ to UE $\hat{m}$, respectively, while $N_0$ is the noise spectral density. The interference term in the denominator represents the aggregate interference at UE $\hat{m}$ caused by the transmissions of other SCBSs $n^{\prime}\in\mathcal{N}\setminus\{n\}$. We assume that the important UE sends (the same) content to all D2D UEs within the cell. Therefore, the rate between important UE $i \in \mathcal{I}$ and UE $m \in \mathcal{M}_i$ at time slot $\tau_1$ is:
\begin{equation}
\label{eq:rate-scbs-imp}
    R_{i,m}= \min_{\forall m \in \mathcal{M}_i} \bigg[ \frac{\tau_1}{T} \cdot B \cdot\log_2 \bigg(1+\frac{p_{i} h_{i,m}}{N_0+\sum_{i^{\prime}\in\mathcal{I}\setminus\{i\}}p_{i^{\prime}} h_{i^{\prime},m}}\bigg) \bigg],
\end{equation}
\noindent
where $p_{i}$ is the transmission power of important UE $i$ and  $h_{i,m}$ is the channel gain from the important UE $i$ to a given UE $m$, respectively. The interference term in the denominator represents the aggregate interference at UE $m$ caused by the transmissions of other important UEs $i^{\prime}\in\mathcal{I}\setminus\{i\}$. The achievable rate between SCBS $n$ and D2D UE $m \in \mathcal{M}_i$ over $T = \tau_0 + \tau_1$ is:
\begin{equation}
\label{eq:rate-d2d}
   \widetilde{R}_{n,m} = \min(R_{n,i}, R_{i,m}).
\end{equation}
\begin{figure}[t]
	\centering
	\includegraphics[width = \columnwidth]{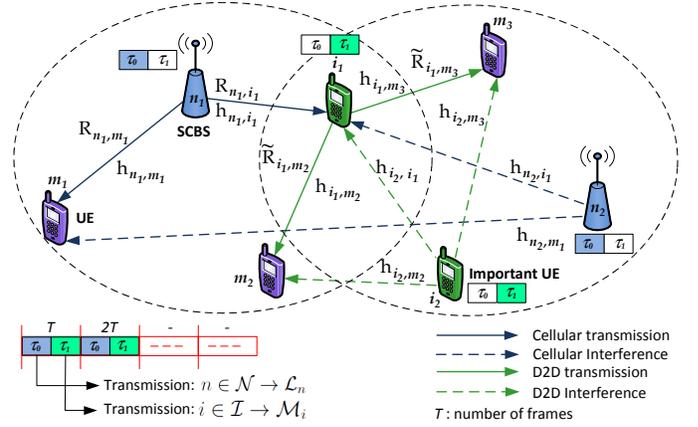}
	\caption{Network model for D2D-enabled SCNs.}
	\label{fig:physical-model}
\end{figure}
Our objective is to propose an efficient and self-organizing user association scheme for D2D-enabled SCNs. In conventional SCNs, each UE is associated to a SCBS based on the maximum signal-to-interference-plus-noise (SINR) or highest received signal strength indicator (RSSI) \cite{Bennis2015} ignoring UEs' contextual information such as proximity services, network and social ties. This motivates for investigating novel social-aware user association mechanism in D2D enabled SCNs.
\subsection{Social Network Model for UEs}
\label{sec:22}
For a more efficient user association, we define the notion of \emph{social tie} which characterizes the strength of the social relationship between two nodes. Here, we assume that a D2D link between two nodes is formed if they are socially connected and they are within proximity range. In order to establish D2D links, some UEs can be selected as important nodes to serve other UEs within proximity. For instance, in the context of content sharing leveraging users social ties allows the SCBS to avoid sending multiple copies of the same content. Instead, by selecting socially important nodes as caching points, UEs communicate via D2D links within the same social network, thereby offloading the base station. In particular, the social network can be represented by a weighted graph whose vertices represent nodes and edges represent their relationships strength based on parameters such as friendship or common interests. We use the concept of \emph{social distance} to measure the strength of a link between two nodes. Let $\mathcal{G}^s = (\mathcal{M},\mathcal{E}, w)$ be the social graph, where $\mathcal{M}$ is the set of UEs, and $\mathcal{E}$ is the set of edges. The social distance $w_{m,\tilde{m}}$ is the weight of the edge $e\in \mathcal{E}$ between UEs $m$ and $\tilde{m}$, and adjacent UEs $(m,\tilde{m})$ are connected via an edge $e$. Moreover, the social distance matrix $\boldsymbol{W}$ is symmetric such that $w_{m,\tilde{m}} = w_{\tilde{m},m}, \forall m, \tilde{m}$.

\subsubsection{Important UE} An important UE is a UE that is socially popular or well-connected as compared to other UEs in the network. Popularity or centrality in social network graphs quantify the importance of a peak in such graphs, or popularity of a node in social networks. Evidently, a node with high popularity has high probability of having a link to other network nodes. Hence, the social importance can be characterized by having curtail points for data distribution in the network, since it has social ties/links with other nodes in the network. The three most popular ways to quantify the social popularity of nodes in a social graph are degree, closeness, and betweenness centrality \cite{Ulrik2008, PC2006}. In this work, the \emph{social importance} of a UE is defined as a mixture of edge betweenness centrality, similarity, and physical distance to other peer nodes. The concept of \emph{social distance} between nodes is based on edge betweenness and node similarity.

\subsubsection{Social Distance} The social distance is defined as the social interaction parameter between communicating nodes. \textit{Important UEs} in a given cell can be interpreted as the subset of UEs with the highest social distance for data transfer. Let $\boldsymbol{W}$ be a social distance matrix where element $w_{m,\tilde{m}}\in[0,1]$ quantifies how the social distance of a user affects its utility which is given as the weighted sum of matrices $\boldsymbol{S}$, $\boldsymbol{A}$ representing respectively the similarity and edge betweenness centrality among UEs \cite{Eli2007}:
\begin{equation}
	\label{eq:socialdistance}
	\boldsymbol{W} = \alpha \boldsymbol{S} + \beta \boldsymbol{A},
\end{equation}
\noindent
where $\alpha$ and $\beta$ given in (\ref{eq:socialdistance}) are tunable parameters such that $\alpha + \beta = 1$. The \textit{important UE} selects its preferred peer based on the composite social and physical distance captured by the following weighted cost matrix $\boldsymbol{X}$, where element $x_{m,\tilde{m}}$ is given by:
\begin{equation}
	\label{eq:weightcost}
        x_{m,\tilde{m}} = (\epsilon_{m,\tilde{m}} {w}_{m,\tilde{m}})/d_{m,\tilde{m}}.
\end{equation}
In (\ref{eq:weightcost}), we combine the social distance with the actual physical distance between UEs, where $w_{m,\tilde{m}}$ denote the social distance, $d_{m,\tilde{m}}$ is the Euclidean physical distance between UE $m$ and $\tilde{m}$ and $\epsilon_{m,\tilde{m}}$ is a normalization constant. In order to establish a D2D link between UEs, we assume that some UEs are selected as important UEs. The social distance is used for ranking the nodes for selecting popular (important) nodes in the social network. A UE is considered important if its aggregate weighted cost is larger than other UEs in the proximity of SCBS $n$, such that $\mathcal{I}_n = \text{argmax}_{\forall \tilde{m} \in \mathcal{M}} \sum_{m \in \mathcal{M}, m \neq \tilde{m}}x_{m \tilde{m}}$. We use an adjacency matrix $\boldsymbol{E}$ to determine the existence of a D2D link, where element $e_{m,i}=1$, if $m$ is connected to important UE $i$, otherwise $e_{m,i}=0$. Next, we briefly review the concepts of similarity and betweenness centrality to capture the social distance among UEs in a given social networks.

\subsubsection{Similarity Matrix} The similarity matrix is a measure of closeness between a pair of nodes. The degree of similarity can be measured by the ratio of common neighbors between individuals in a social network. The degree of similarity between UEs $m$ and $\tilde{m}$ has an important effect in terms of data dissemination. Nodes having lower degree of similarity are good candidates for data dissemination \cite{TZhou2009}. Let $\boldsymbol{Q}$ be a $M \times M$ similarity matrix, such that a pair of nodes $(m,\tilde{m})$, depending on whether they are connected directly or indirectly, their corresponding similarity measuring element $q_{m,\tilde{m}}$ of $\boldsymbol{Q}$ is defined as \cite{TZhou2009}:
\begin{equation}
\label{eq:similarity}
			q_{m,\tilde{m}} = \begin{cases}	\displaystyle \sum_{\widehat{m} \in \nu(m)\cap\nu(\tilde{m})}\frac{1}{t(\widehat{m})}, & \mbox{if $m,\tilde{m}$ are connected},\\
        0, & \mbox{otherwise},\end{cases}
\end{equation}
\noindent
where $\nu(m)$ is the set of neighbors of $m$, $\;\widehat{m} \in \nu(m) \cap \nu (\tilde{m})$ are the common neighbors of UEs $m$ and $\tilde{m}$, and $t(\widehat{m})$ is the degree of UE $\widehat{m}$. To normalize the similarity matrix, we use the simple additive weighting (SAW) method, in which the normalized value of each element $q_{m,\tilde{m}}$ of $\boldsymbol{Q}$ is:
\begin{equation}
\label{eq:normsim}
	s_{m,\tilde{m}} = q_{m,\tilde{m}}/q_{\tilde{m}}^{\max} \;\; \forall m,\tilde{m},
\end{equation}
where $q_{\tilde{m}}^{\max}=\max_{m} q_{m,\tilde{m}}$. Consequently, we obtain the normalized similarity matrix $\boldsymbol{S}$ of dimension $M \times M$, where the $m$th row and $\tilde{m}$th column of $\boldsymbol{S}$, i.e., $s_{m,\tilde{m}}$ denotes the normalized similarity between UEs $m$ and $\tilde{m}$.

\subsubsection{Edge Betweenness Centrality} Edge betweenness centrality is based on the idea that an edge becomes central to a graph if it lies between many other UEs, i.e., it is traversed by many of the shortest paths connecting a pair of UEs \cite{PC2006}. Edges with a high betweenness centrality are considered important because they control information flow in the social network. Let $\boldsymbol{A}$ be $M \times M$ edge betweenness centrality matrix, where element $a_{m,\tilde{m}}$ is the edge betweenness centrality of the link between nodes $m$ and $\tilde{m}$. The betweenness centrality $a_{m,\tilde{m}}$ of an edge $e$ \cite{Ulrik2008} between UEs $(m,\tilde{m})$ is the sum of the fraction of all-pairs' shortest paths that pass through edge $e$. The normalized $a_{m,\tilde{m}}$ is:
\begin{equation}
\label{eq:betweenness}
	 a_{m,\tilde{m}} = \frac{ \displaystyle \sum_{m,\tilde{m} \in \mathcal{M}} \frac{\gamma (m,\tilde{m}|e)}{\gamma(m,\tilde{m})} } {(M-1) (M-1)},
\end{equation}
\noindent
where $M$ is the number of UEs, the summation $\gamma(m,\tilde{m})$ is over the number of shortest $(m,\tilde{m})$-paths, and $\gamma (m,\tilde{m}|e)$ is the number of those paths that traverse edge $e$. To provide more insights on this social model, we present the following example to compute the social distance (\ref{eq:socialdistance}) and weighted cost (\ref{eq:weightcost}).

\begin{exmp}
Consider four UEs from the set $\mathcal{M} = \{m_{1},m_{2},m_{3},m_{4}\}$ and one SCBS, $ n_1 \in \mathcal{N}$ as shown in Fig. \ref{fig:exampleNetwork}. Formally, the connectivity of UE $m\in \mathcal{M}$ can be represented by an adjacency matrix $\boldsymbol{E}$, which is a $M \times M$ symmetric matrix, where $M$ is the number of UEs in the social graph $\mathcal{G}^s$. The adjacency matrix is expressed as:
\begin{figure}[t]
\centering
\includegraphics[width = \columnwidth]{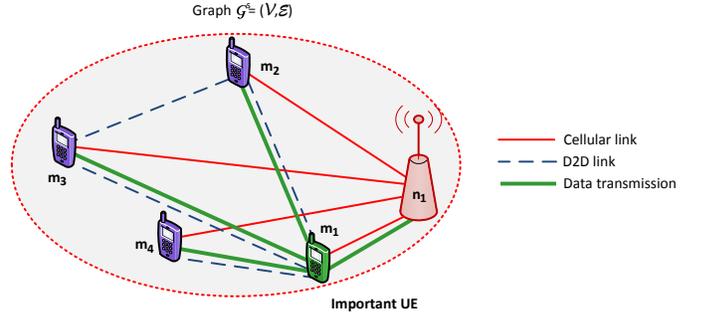}
\caption{Illustrative example of the considered network deployment with one SCBS $(n_1)$, one important UE $(m_1)$ and three UEs $(m_2, m_3, m_4)$ represented as graph $\mathcal{G}^s=(\mathcal{V}, \mathcal{E})$.}
\label{fig:exampleNetwork}
\end{figure}%
\begin{equation}
			E_{m,\tilde{m}} = \begin{cases}	1, & \mbox{if there is a edge between UEs $m$ and $\tilde{m}$},\\
        0, & \mbox{otherwise}.\end{cases}
				\nonumber
\end{equation}
The similarity $\boldsymbol{S}$ and edge betweenness centrality $\boldsymbol{A}$ between UEs computed using (\ref{eq:similarity}) and (\ref{eq:betweenness}), respectively.
\vspace{-0.3cm}
\begin{equation}
\nonumber
\boldsymbol{S} = \bbordermatrix{~ &  &  &  &    \cr
     	m_{1} & 0     & 0.5833 & 0.583 & 0.2500 \cr
		m_{2} & 0.583 & 0      & 0.500 & 0.500  \cr
		m_{3} & 0.583 & 0.500  & 0     & 0.500  \cr
        m_{4} & 0.250 & 0.500  & 0.500 & 0      \cr}
\end{equation}
\vspace{-0.5cm}
\begin{equation}
\nonumber
\boldsymbol{A} = \bbordermatrix{~ &  &  &  & \cr
		m_{1} & 0      & 0.0750 & 0.0750 & 0.100 \cr
		m_{2} & 0.0750 & 0      & 0.0500 & 0      \cr
		m_{3} & 0.0750 & 0.0500 & 0      & 0      \cr
        m_{4} & 0.1000 & 0      & 0      & 0      \cr}
\end{equation}
\vspace{-0.5cm}
\begin{equation}
\nonumber
\boldsymbol{W} = \bbordermatrix{~ &  &  &  & \cr
  m_{1} & 0      & 0.3292 & 0.3292 & 0.1750 \cr
  m_{2} & 0.3292 & 0      & 0.2750 & 0.2500 \cr
  m_{3} & 0.3292 & 0.2750 & 0      & 0.2500 \cr
  m_{4} & 0.1750 & 0.2500 & 0.2500 & 0      \cr}
\end{equation}
\vspace{-0.5cm}
\begin{equation}
\nonumber
\boldsymbol{X} = \bbordermatrix{~ &  &  &  & \cr
  m_{1} & 0      & 0.0432 & 0.0411   & 0.01240 \cr
  m_{2} & 0.0432 & 0      & 0.0320   & 0.0117 \cr
  m_{3} & 0.0411 & 0.0320 & 0        & 0.0121 \cr
  m_{4} & 0.0124 & 0.0117 & 0.0121   & 0      \cr}
\end{equation}
To determine which UE is the most important UE with respect to SCBS $n_1$, we use (\ref{eq:socialdistance}) with $\alpha = 0.5 $ and $\beta = 0.5$. By looking at the social distance matrix $\boldsymbol{W}$ and calculating the respective weighted cost matrix (\ref{eq:weightcost}), it is clear that UE $m_{1}$ is more socially important than other UEs while, $m_{4}$ is the least important one.
\end{exmp}
\section{Problem Formulation}
\label{sec:3}
As previously mentioned, classical approaches for user association in SCNs, are typically based on physical layer metrics and assume a central controller which gathers all network information and decisions \cite{Bennis2015}. In this section, we study the problem of base station clustering and flexible user association by incorporating users' social-ties in the network. Then, we will use the framework of matching theory \cite{Eliz2011}, to develop a distributed and self-organizing solution composed of two steps: 1) we cluster SCBSs in terms of mutual interference described in detail in Section \ref{sec:4}, 2) we study a two-sided matching model that enables each cluster to efficiently optimize user association by incorporating both physical and social aspects. Therefore, we define a two-sided matching game in which UE and SN acts as players. In this game each player tries to match (associate) to the most suitable serving node based on its own preference $\eta: \mathcal{M} \rightarrow \mathcal{P}$. Next, we define the social-aware utility functions which capture both wireless and social network metrics in order to optimize the user association mechanism.
\subsection{UE and SN Utilities}
The utility of a given UE is defined as the achievable rate taking into account the interference from adjacent SCBSs and important UEs. An arbitrary UE $m$ can either connect to a SCBS $n$ via a cellular connection or an \textit{important UE} $i \in \mathcal{I}$, via a \textit{D2D link}. The achievable rate between SN $p \in \mathcal{P}$ (important UE or SCBS) and UE $m \in \mathcal{M}$ for a given matching $\eta$ is:
\begin{align}
\label{eq:utility-ue}
   & U_{p,m} (R_{p,m},w_{p,m}, \eta)  \nonumber \\
     & = \begin{cases} R_{p,m} + \displaystyle \sum_{\tilde{m} \in \mathcal{M} \setminus{m}} \frac{\widetilde{R}_{m,\tilde{m}}}{1-w_{m,\tilde{m}}}e_{m,\tilde{m}}, \;\; \mbox{($p=n$ and $m=i$)},\\
   R_{p,m}, \;\; \mbox{if $m$ connected to SN $p$ ($p=n$)}, \\
   \widetilde{R}_{p,m}, \;\; 
   \mbox{D2D UE $m$ as per (\ref{eq:rate-d2d})},
   \end{cases}
\end{align}
\noindent
where $w_{p,m}$ represents the social distance between SN $p$ and UE $m$ in the social graph $\mathcal{G}^s$ defined in (\ref{eq:weightcost}). The element $e_{m,\tilde{m}} \in \{0,1\}$, shows the existence of a D2D-link between UE $m$ and UE $\tilde{m}$. Moreover, \eqref{eq:utility-ue} defines the utility of a UE when it acts as an important UE $m=i$ serviced by SCBS $p=n$ and forwards data to other UEs within a given social network of UEs. Therefore, in order to capture the social impact between a pair of UEs $m$ and $\tilde{m}$ that have social ties between them in the social network, we formalize the strength of the social tie (social distance) as $w_{m, \tilde{m}} = [0,1)$, with a higher value of $w_{m, \tilde{m}}$ being a stronger social tie. It follows that the social utility of an important UE consists of its achievable rate and a weighted sum of the achievable rates of other UEs having social tie with it \cite{XChen2014}. This will induce socially well connected UEs to associate to one another.

To calculate the utility of SN $p \in \mathcal{P}$, we incorporate the social distance of each UE $m$ with respect to SN $p$ \cite{Strat2011}. The utility of an SN $p$ is the sum of utilities of its associated UEs $m \in \mathcal{L}_{p}$, for a matching $\eta$ given by:
\begin{align}
\label{eq:utility-scbs}
	U_{p}(\eta) &=\sum_{m \in \mathcal{L}_{p}}U_{p,m}(R_{p,m},w_{p,m}).
\end{align}
\subsection{Social Welfare}
We use the social welfare to define the network wide performance expressed as the sum of the utilities of UEs and SNs \cite{Eliz2011}.
\begin{equation}
\label{eq:socialwelfare}
        \Gamma(\eta) = \sum_{p \in \mathcal{P}} \sum_{m \in \mathcal{L}_p} U_{p,m}(R_{p,m},w_{p,m}, \eta),
\end{equation}
\noindent
where $\mathcal{M}$ and $\mathcal{P}$ are the set of UEs and set of SNs in the network, respectively. The objective is to maximize the total network wide social welfare given in (\ref{eq:socialwelfare}). Unfortunately, maximizing the network-wide social welfare in a centralized manner requires large information exchange between all SCBSs and UEs in the network, calling for a distributed solution with minimum coordination. To address this issue, we group mutually-interfering SCBSs into a number of clusters such that SCBSs within a cluster coordinate locally among each other. Specifically, we consider that SCBSs are grouped into a set of well-chosen clusters $\mathcal{C} = \{C_1,C_2, ..., {C_{|\mathcal{C}|}}\}$. Let $\eta_c$ represents the user association (matching), such that $\eta_c(m,p)$ represents the matching of UE $m$ and SN $p$ within cluster $c \in \mathcal{C}$. Each cluster $c$ consists of locally-coupled SCBSs in terms of mutual interference in which $\mathcal{N}_c$ denotes the number of SCBSs belonging to cluster $c\in \mathcal{C}$. It is assumed that, in cluster $c$, SCBSs efficiently offload traffic among each other while satisfying UEs' QoS. Moreover, the matching for each cluster is represented by a vector $\boldsymbol{\eta} = [\eta_1, \eta_2, ..., \eta_{|\mathcal{C}|}]$. Hereinafter, we refer to $\boldsymbol{\eta}$ as the ``network wide matching'', which captures the utilities of all the UEs and SNs in the network whereas the per cluster matching is denoted by $\eta_c$. Finally, we define the social welfare per cluster $c$, $\Gamma_c(\eta_c)$ given matching $\eta_c$ by:
\begin{equation}
\label{eq:socialwelfare-cluster}
     \Gamma_c(\eta_c) = \sum_{p \in \mathcal{P}_c} \sum_{m \in \mathcal{M}_c} U_{p,m}(R_{p,m},w_{p,m}, \eta_c),
\end{equation}
\noindent
where $\mathcal{M}_c$ is the set of the UEs, $\mathcal{N}_c$ the set of SCBSs, and $\mathcal{P}_c$ the set of SNs belonging to cluster $c \in \mathcal{C}$. The objective is to maximize the social welfare for all clusters, which is given by the following optimization problem:
\begin{subequations}
    \label{eq:mainprb}
    \begin{eqnarray}
        \underset{\boldsymbol{\eta}, \mathcal{C}}{\text{maximize}} && \sum_{\forall c \in \mathcal{C}} \Gamma_{c}(\boldsymbol{\eta}) \label{eq:matching1} \\
        \text{subject to} && |\mathcal{N}_c| \geq 1,\; \forall c \in \mathcal{C}, \label{eq:matching2} \\
        && \bigcup_{\forall c \in \mathcal{C}} \mathcal{N}_c = \mathcal{N}, \; \mathcal{N}_c \cap \mathcal{N}_{c'} = \emptyset, \; \nonumber \\ 
        &&\hspace{5em} \forall c,c' \in \mathcal{C},\; c \neq c',  \label{eq:matching3} \\
        && \sum_{\forall p \in \mathcal{P}_c} \eta_c(m,p) = 1, \; \forall m \in \mathcal{M}_c, \label{eq:matching4}
    \end{eqnarray}
\end{subequations}
\noindent
where constraints (\ref{eq:matching2}) and (\ref{eq:matching3}) imply that any SCBS is part of one cluster only. The constraint given in (\ref{eq:matching4}) depicts that a given UE $m$ can be matched to only one SN $p$ whereas, SN $p$ can be matched to one or more UEs for a given matching $\eta_c$. Solving (\ref{eq:mainprb}), requires global network information, which can be complex and not practical. Therefore, in the subsequent section, we propose a distributed solution composed of: 1) dynamic SCBS clustering, 2) flexible user association based on intra-cluster coordination. The different steps of our proposed solution are summarized in Fig. \ref{fig:flowchart}.
\begin{figure}[t]%
\centering
\includegraphics[width = \columnwidth]{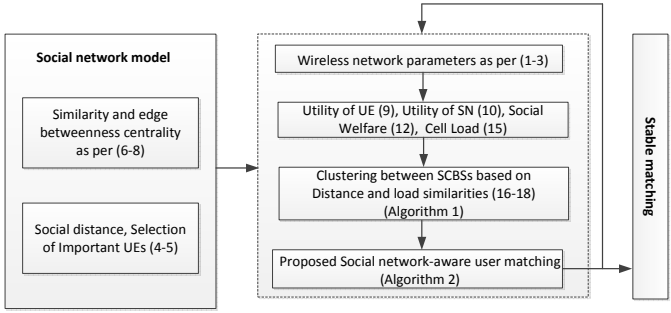}
\caption{An Illustration of the different steps of the proposed solution.}
\label{fig:flowchart}
\end{figure}%

\section{Dynamic Clustering}
\label{sec:4}
The centralized optimization problem in (\ref{eq:mainprb}) is difficult to solve and is combinatorial in nature. Developing a decentralized approach based on minimal coordination between neighboring SCBSs is needed. First, we propose a cluster-based mechanism which incorporates, both location of SCBS and their traffic load. Clustering enables coordination among well selected pairs of SCBSs. We propose a dynamic clustering approach, in which the cluster size varies dynamically depending on the dynamic nature of traffic (e.g., load, interference). Subsequently, we propose a distributed and self-organized matching algorithm to dynamically optimize the user association per cluster. The procedure only depends on the local information available at the cluster level. The set of SCBSs are partitioned into $|\mathcal{C}|$ non-overlapping clusters, such that:
\begin{figure*}[t]%
	\centering
	\includegraphics[scale = 0.60, angle=0]{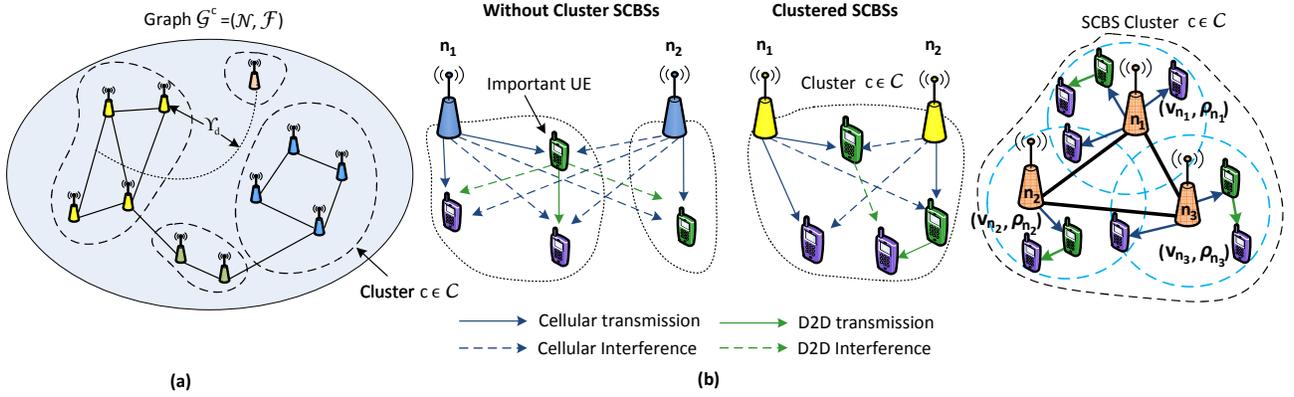}
	\caption{(a) Graph representation $\mathcal{G}^c=(\mathcal{N},\mathcal{F})$ of clustering among SCBSs, (b) Intra-cell and D2D interference for uncoordinated SCBSs, cluster or cluster-based SCBSs and D2D users interfere with each other within a cluster.}
	\label{fig:clustering}
\end{figure*}%
\begin{equation}
\label{eq:static-cluster}
     \bigcup_{\forall c \in \mathcal{C}} \mathcal{N}_c = \mathcal{N} \; \text{and} \; \mathcal{N}_c \cap \mathcal{N}_{c'} = \emptyset, \; \forall \; c \neq c'.
\end{equation}
Let $\mathcal{G}^c = (\mathcal{N},\mathcal{F})$ be the undirected connected graph, where $\mathcal{N}$ is the set of SCBSs $N$ and $\mathcal{F} \subset \mathcal{N} \times \mathcal{N}$ is the set of links between locally-coupled SCBSs. In order to calculate the cell load, let us  denote $\eta_n$ as a UE random association\footnote{Equivalently, the UE can be initially associated to the closest SCBS.} to an SCBS $n$ and $0\leq \rho_n(\eta_n) \leq1$ as the normalized load of SCBS $n\in\mathcal{N}$, given by:
\begin{align}
\label{eq:load-scbs}
	   \rho_n(\eta_n) \triangleq \sum_{\forall m \in \mathcal{L}_{n} \setminus \mathcal{M}_i } \frac{R_{n,m}}{R_{n,m}^{\text{max}}}, \forall i \in \mathcal{I},
\end{align}
where $R_{n,m}^{\text{max}}$ is calculated neglecting the interference from other SCBSs. The average load of each cluster $\rho_c$ for a given matching $\eta_c$ is the arithmetic average load of its member SCBSs, such that $\rho_c(\eta_c) = \frac{1}{|\mathcal{N}_c|}\sum_{\forall n \in \mathcal{N}_c} \rho_n(\eta_n)$. The clustering mechanism between SCBSs and intra-cluster coordination are demonstrated in Fig. \ref{fig:clustering}.
\vspace{-0.3cm}
\subsection{Similarity-based SCBS Clustering}
In order to minimize the signalling overhead, we group SCBSs based on similar attributes. There are numerous aspects that impact interference between SCBSs. Two key factors are their physical distance separation and traffic load condition. Having said that, we utilize location and traffic load similarities to group SCBSs and we use a spectral clustering algorithm \cite{ULuxburg2007} to identify similarities between SCBSs to form clusters. Next, we calculate the Gaussian affinity matrix \cite{Livehoods2012} representing the similarities between SCBSs based on their geographical locations and loads.

Let $v_{n_1}$ and $v_{n_2}$ be the geographical coordinates of SCBS $n_1$ and $n_2$, respectively, in the Euclidean space. Here, we define parameter $\Upsilon_d$ to represent the presence of a link or edge $f\in \mathcal{F}$ between neighboring SCBS $n_1$ and $n_2$ such that $\{f_{v_{n_1},v_{n_2}}=f_{v_{n_2},v_{n_1}}=1, ||v_{n_1} - v_{n_2}|| \leq \Upsilon_d\}$.
To find locally-coupled SCBSs in terms of distance, let $\boldsymbol{D}$ denotes the Gaussian distance similarity matrix, and let $ s_{n_1,n_2}^d$ be an element of $\boldsymbol{D}$ representing the distance similarity among SCBSs $n_1, n_2 \in \mathcal{N}$ given \cite{ULuxburg2007}:
\begin{equation}
\label{eq:distanceSim}
			s_{n_1,n_2}^d = \begin{cases}	\exp \bigg( \frac{-||v_{n_1} - v_{n_2}||^2}{2\sigma_d^2} \bigg), & \mbox{if $||v_{n_1} - v_{n_2}|| \leq \Upsilon_d$},\\
                            0, & \mbox{otherwise},
                            \end{cases}
\end{equation}
where the parameter $\sigma_d$ controls the impact of neighborhood size. For a given $\Upsilon_d$ the range of the Gaussian distance similarity for any two connected SCBSs is $[e^{-\Upsilon_d / 2\sigma_d^2},1]$, whereas the lower bound is determined by $\sigma_d$. The rationale for (\ref{eq:distanceSim}) is that, when the SCBSs are located far from each other, the distance similarity is low. On the other hand, the distance similarity increases as SCBSs come closer to one another and more likely to cooperate with each other.

Unlike the static distance based clustering in (\ref{eq:distanceSim}), the traffic load of SCBSs varies over time thus, the load based clustering provides a more dynamic manner of grouping neighboring SCBSs. Therefore, we are interested in clustering SCBSs which have load dissimilarities. let $ s_{n_1,n_2}^l$ be an entry of the Gaussian load dissimilarity matrix $\boldsymbol{L}$ between SCBS $n_1, n_2\in \mathcal{N}$ with respect to cell load $\rho_{n_1}$ and $\rho_{n_2}$, which is given \cite{ULuxburg2007}:
\begin{equation}
\label{eq:loadSim}
    s_{n_1,n_2}^l = \exp \bigg(\frac{||\rho_{n_1} - \rho_{n_2}||^2}{2\sigma_l^2} \bigg),
\end{equation}
where the parameter $\sigma_l$ controls the impact of load on the similarity. The range of load dissimilarity is $[1 , e^{1/2\sigma_l^2}]$. The upper bound of the dissimilarity is based on the choice of $\sigma_l$. We use a spectral clustering algorithm (Algorithm \ref{algo:algo1}) to form clusters between SCBSs based on their Gaussian affinity matrix. The Gaussian affinity matrix effectively captures the distance and load similarities. The Gaussian affinity matrix $\boldsymbol{Y}$ whose element $y_{n_1,n_2}$ represents joint similarity between two SCBSs $n_1, n_2 \in \mathcal{N}$ based on the distance and load is:
\begin{align}
	\label{eq:gaussiancluster}
	 \boldsymbol{Y}=\boldsymbol{D}^{\Omega} \cdot \boldsymbol{L}^{1-\Omega},
\end{align}
where $0 \leq \Omega \leq 1$ controls the impact of distance and load similarities on the joint similarity. Here, the cooperation between SCBSs is only possible if a physical link between them exists i.e, $\forall \Omega \in [0,1], f_{V_{n_1},v_{n_2}}=0 \Longrightarrow y_{n_1,n_2}=0$.
\begin{algorithm}[t]
	\footnotesize
	\caption{Spectral clustering for clustering SCBSs \cite{ULuxburg2007}}
	\label{algo:algo1}
	\DontPrintSemicolon 
	\KwIn{$\mathcal{N}, \boldsymbol{Y} = {y_{n_1,n_2}}, \mathcal{G}^c$ the graph of SCBS, $k_{\textrm{min}}$, $k_{\textrm{max}}$}
	Compute diagonal degree matrix $\boldsymbol{H}$ with diagonal $(d_1, ..., d_{nv})$ where $d_i = \sum_{j=1}^{nv} y_{n_i,n_j}$. \;
	$\boldsymbol{Z} := \boldsymbol{H} - \boldsymbol{Y}$ \;
	$\boldsymbol{Z}_{\textrm{norm}} := \boldsymbol{H} ^{-1/2} \boldsymbol{Z} \boldsymbol{H} ^ {-1/2}$. \;
	Let $\lambda_1 \leq .. \leq k_{\textrm{max}}$ be the smallest eigenvalues of $\boldsymbol{Z}_{\textrm{norm}}$. Set $k = \arg \max_{i=k_{\textrm{min}}, ..., k_{\textrm{max}}-1} \Delta_i$ where $\Delta_i = \lambda_{i+1}- \lambda_i $. \;
	find the $k$ smallest eigenvectors $e_1, ..., e_k$ of $\boldsymbol{Z}_{\textrm{norm}}$. \;
	Let $\boldsymbol{E}$ be an $nv \times k$ matrix with $e_i$ as columns. \;
	Use k-means clustering to cluster the rows of matrix $E$. \;
	Cluster set $\{1, ..., C_{|\mathcal{C}|}\}$. \;
\end{algorithm}
In our model, when the UEs are first admitted in the system, they will be associated to the SCBS based on the max-RSSI criterion. The SCBS to which UEs associate is then referred as \emph{anchor SCBS} i.e., UE $m$ associates with anchor SCBS $n$ if and only if $\text{RSSI}_{n,m} \geq \text{RSSI}_{n',m}$ for all $n'\in\mathcal{N}$. Moreover, we would like to stress the fact that the goal of clustering is to enable coordination among well selected pairs of SCBSs within the same cluster. Once the clusters are formed among SCBSs, UE $m \in \mathcal{M}_c$ will always be a part of the same cluster $c$, which corresponds to its anchor SCBS $n \in \mathcal{N}_c$. For UEs at the edge of multiple clusters, they will simply remain associated to the SCBS cluster that contains their original anchor SCBS to which they associated based on the RSSI criterion. Let $n(m)$ be the anchor SCBS of UE $m$ and $\mathcal{L}_n$ be the set of UEs served by SCBS $n$. Irrespective of the fact that whether UE $m$ is at the cluster edge or not, the following conditions will be always satisfied.
	\begin{enumerate}[(i)]
		\item $n(m) \in c \iff m \in \mathcal{M}_c$ such that $c = {\mathcal{N}_c \cup \mathcal{M}_c} $,
		\item $m \in \mathcal{L}_n, n \in \mathcal{N}_c \Rightarrow n(m) \in c \in \mathcal{C}$.
	\end{enumerate}
The above conditions imply that, a given UE $m$ will be associated to an SCBS $n$ based on the max-RSSI if and only if $m$ and $n$ belong to the cluster $c$. Furthermore, it implies that the set of UEs that are serviced by SCBS $n$ also belong to the same cluster $c$. It is worth to mention that after clustering is performed, UEs can be served by any SN (i.e., SCBS, important UE) belonging to the same cluster $c$ based on the proposed association within cluster $c$.

\section{Per-Cluster Social Network-Aware User Association as a Matching game with Externalities}
\label{sec:5}
Our objective is to develop a self-organizing mechanism for solving (\ref{eq:mainprb}). In order to overcome the combinatorial nature of the user association problem, we make use of the framework of matching theory in which, the social and wireless characteristics are incorporated into the matching game. Such wireless and social effects motivate the need for advanced model for matching theory that take into account the wireless interference and strength of social ties. Thus, we propose a social network-aware matching game per cluster $c \in \mathcal{C}$ capturing both physical and social aspects of the network in which each UE $m \in \mathcal{M}_c$ is associated to the best serving node $p \in \mathcal{P}_c$ via a matching $\eta_c: \mathcal{M}_c \rightarrow \mathcal{P}_c$.
\begin{defn}
\label{def:1}
A \textit{matching game} is defined by two sets of players ($\mathcal{M}_c, \mathcal{P}_c$) and two preference relations $\succ_m$, $\succ_p$ for each UE $m \in \mathcal{M}_c$ to build his preference over SN $p \in \mathcal{P}_c$ and vice-versa in a cluster $c$. The outcome of the matching game is the association mapping $\eta_c$ that matches each player $ m \in \mathcal{M}$ to player $p=\eta_c(m)\; p\in \mathcal{P}_c$ and vice versa such that $m = \eta_c(p)\;, m \in \mathcal{M}_c$.
\end{defn}
A \emph{preference relation} $\succ$ is defined as a reflexive, complete and transitive binary relation between players in $\mathcal{M}_c$ and $\mathcal{P}_c$. Thus, a preference relation $\succ_m$ is defined for every UE $m \in \mathcal{M}_c$ over the set of SNs $\mathcal{P}_c$ such that for any two nodes in $p,\tilde{p} \in \mathcal{P}^{2}_c, p \neq \tilde{p} $ and two matchings $\eta_{c}, \eta'_{c} \in \mathcal{M}_c \times \mathcal{P}_c, \eta_c \neq \eta_c' \;, p =\eta_c(m)\;, \tilde{p}=\eta'_{c}(m)$:
\begin{align}
    \label{eq:ue-pref}
   & (p,\eta_c, \boldsymbol{\eta}_{-c}) \succ_m (\tilde{p},\eta'_c, \boldsymbol{\eta}_{-c}) \Leftrightarrow \nonumber \\ 
   & U_{p,m}(R_{p,m}, w_{p,m},\eta_c, \boldsymbol{\eta}_{-c}) > U_{\tilde{p},m}(R_{\tilde{p},m},w_{\tilde{p},m}, \eta_c', \boldsymbol{\eta}_{-c}),
\end{align}
where  $(p,m) \in \eta_c$ and $(\tilde{p},m) \in \eta_c'$. Similarly the preference relation $\succ_p$ for SN $p$ over the set of UEs $\mathcal{M}_c$ is defined such that for any two UEs $m,\tilde{m} \in \mathcal{M}_c, m \neq \tilde{m}\;, m =\eta_c(p)\;, \tilde{m}=\eta'_{c}(p)$:
\begin{equation}
 \label{eq:sn-pref}
    (m,\eta_c, \boldsymbol{\eta}_{-c}) \succ_p (\tilde{m}, \eta_c', \boldsymbol{\eta}_{-c}) \Leftrightarrow U_{p}(\eta_c) > U_{\tilde{p}}(\eta_c').
\end{equation}
Hereinafter, for notational simplicity we define $U_{p,m}(\eta_c) :\triangleq U_{p,m}(R_{p,m}, w_{p,m},\eta_c, \boldsymbol{\eta}_{-c})$.
\begin{remark}
    The proposed social network-aware matching game has externalities and peer effects.
\end{remark}
\noindent
Each SN and UE independently rank one another based on the respective utilities in (\ref{eq:utility-ue}) and (\ref{eq:utility-scbs}) that capture the interference and social ties among nearby UEs. However, the selection preferences of UEs are \emph{interdependent} and influenced by the existing network wide matching, which leads to a many-to-one matching game. Such effects which dynamically change the preference of each player in the network, are called externalities \cite{Bando2012}. In particular, the considered game is a matching game with \emph{externalities} due to mutual interference and social ties between nodes, which differs from classical applications of matching theory in wireless such as those in \cite{EA2011, SBayat2012, ALeshem2011}. Thus, each player $m \in \mathcal{M}_c$ has a preference over players in $p \in \mathcal{P}_c$ and vice versa and these preferences change as the game evolves. Finally, each UE is matched (associated) to one SN, while SNs can be matched to multiple UEs which makes the matching game as many-to-one.

In classical matching games with no peer effects, each UE has a strict preference over SNs and vice versa that remain unchanged for the overall game. The key premise of our work is that peer effects are often the result of an underlaying social network. For our work we assume that peer effects is captured at the important UEs due to social ties with other UEs in the proximity as per (\ref{eq:utility-ue}). The strength of social ties (i.e, social distance) among players may change if UEs are socially connected to other UEs within their proximity range and thus, impact on the preference at the UEs and important node (SNs). To deal with externalities and peer effects due to the interference and social ties, the most important notion is the stability of the solution. To solve the problem in (\ref{eq:mainprb}), each UE and SN defines its preference over each other using (\ref{eq:ue-pref}) and (\ref{eq:sn-pref}). The objective of each player is to maximize its own utility, by associating to its most preferred SN.
\subsection{Proposed Social Network-Aware User Association Algorithm}
\label{sec:51}
In order to solve the proposed matching game, usually deferred acceptance algorithm guarantees a stable solution in one-to-one matching \cite{ALeshem2011, EA2011, YWu2011}. Nevertheless, such approaches do not account for externalities and peer effects, and, thus, they may yield lower utilities or may not converge. In fact, due to externalities and peer effects, players continuously change their preference orders, in response to the formation of other UE-SN links which renders classical deferred acceptance solutions such as in \cite{ALeshem2011, EA2011, YWu2011} not applicable for our model. Therefore, to seek a stable user association an Algorithm \ref{algo:algo2} is proposed which is based on the concept of Markov Chain Monte Carlo (MCMC) \cite{Eliz2011}. In this approach, instead of using the greedy way of selecting the ``best'' matching, the matching is chosen based on a probability, which depends on the swap resulting in an increase of the social welfare for a given cluster $c$.

\LinesNumberedHidden{
\begin{algorithm}[t]
\footnotesize
\caption{Proposed Social Network-aware User Association Algorithm}
\label{algo:algo2}
\DontPrintSemicolon 
\KwData{Each UE $m$ is initially associated to a randomly selected SCBS $n$.}
\KwResult{Convergence to a stable matching $\boldsymbol{\eta}$.}

\textbf{Phase I - Social distance computation;}

\begin{itemize}
    \item UEs and SNs exchange social-aware information and compute $\boldsymbol{S}$ and $\boldsymbol{A}$ using \eqref{eq:similarity} and \eqref{eq:betweenness}; 
	\item Calculation of important UEs list $\mathcal{I}$ based on the social distance $\boldsymbol{W}$ using (\ref{eq:socialdistance}) and (\ref{eq:weightcost});
    \item Node with highest rank in the sorted list $\mathcal{I}_n$ is selected as the important node $i \in \mathcal{I}$
\end{itemize}

\While{$t \leq  t_{\text{max}}$} {

\textbf{Phase II - Clustering among SCBSs;}
\begin{itemize}
    \item Compute gaussian distance and load similarity metrics in (\ref{eq:distanceSim}), (\ref{eq:loadSim});
    \item Gaussian similarity matrix computed using (\ref{eq:gaussiancluster});
    \item Clusters $|\mathcal{C}|$ are formed among SCBSs using Algorithm \ref{algo:algo1};
\end{itemize}

\textbf{Phase III - SN discovery and utility computation;}

\begin{itemize}
	\item Each UE $m$ discovers a SN $p$ in the cluster vicinity $c \in \mathcal{C}$;
	\item $U_{p,m}(R_{p,m}, w_{p,m})$ using (\ref{eq:utility-ue}), $U_{p}$ using (\ref{eq:utility-scbs}), and social welfare $\Gamma_{c}(\eta_c)$ using (\ref{eq:socialwelfare-cluster}) for cluster $c$ are updated;

\end{itemize}

\textbf{Phase IV - Swap-matching evaluation;}
   \While{$c \leq \max(|\mathcal{C}|)$} {
   \begin{itemize}
	   \item Pick a random pair of UEs $\{m,\tilde{m}\} \in \mathcal{M}_c$ within the cluster $c$;
    \end{itemize}
    	\While{$count \leq  count_{\textrm{max}}$} {
    	\begin{itemize}
    			 \item $U_{p,m}(R_{p,m}, w_{p,m})$, $U_{p}$ are updated based on the current \\ matching $\eta_c$;
    				\item UEs and SNs are sorted by $\Gamma_{c}(\eta_c)$;
        			\item swap the pair of UEs $\eta_c \Rightarrow \eta_c^{\leftrightarrow}$
    				\item $\Gamma_{c}(\eta_c, \boldsymbol{\eta}_{-c}) = \Gamma_{c}^{\textrm{best}}(\eta_c, \boldsymbol{\eta}_{-c})$
    	\end{itemize}
         $P_T = \frac{1}{1+e^{-\vartheta (\Gamma_{c}(\eta_c, \boldsymbol{\eta}_{-c}) - \Gamma_{c}(\eta_c^{\leftrightarrow}, \boldsymbol{\eta}_{-c}))}}$;\\
         $(\eta_c, \boldsymbol{\eta}_{-c}) \leftarrow (\eta'_c, \boldsymbol{\eta}_{-c})$ change the configuration with probability $P_T$;\\
    		    \If{$\Gamma_{c}(\eta_c^{\leftrightarrow}, \boldsymbol{\eta}_{-c}) > \Gamma_{c}^{\textrm{best}}(\eta_c,\boldsymbol{\eta}_{-c})$}{
                        $\Gamma_{c}^{\textrm{best}}(\eta_c,\boldsymbol{\eta}_{-c}) = \Gamma_{c}(\eta_c^{\leftrightarrow}, \boldsymbol{\eta}_{-c})$ \;
        }
    \Else{
      SN $p$ refuses the proposal, and UE $m$ sends a proposal to the next configuration at $count$ \;
    }
		$count = count + 1$
    }
    $c = c + 1$
}
$t = t + 1$
}
\textbf{Phase V - Stable matching}
\end{algorithm}}

In the proposed algorithm, an important UE $i$ and set of UEs serviced by the important UE $\mathcal{M}_i$ seek the same content. The preferences of both the UEs and SNs are done locally within a given cluster $c$, whereas the coordination is required between adjacent SCBSs. If UE $m \in \mathcal{M}_c$ is not currently served by its most preferred SN $p \in \mathcal{P}_c$, it sends a matching proposal to another SN $\tilde{p}$. Upon receiving a proposal, SN $\tilde{p}$ updates its utility and accepts the request of the UE if the externalities and peer effects resulting from such swap do not yield a degradation of the social welfare of the cluster. The main goal of each UE is to maximize its own utility while associating with the most preferred SN or important UE. Initially, each UE is associated to a randomly selected SN based on the max-RSSI criterion. In the first phase, social distance matrix is calculated using (\ref{eq:socialdistance}) and then each SCBS compute the list of important UEs $\mathcal{I}_n$ using (\ref{eq:weightcost}). In the next phase, clusters are formed among SCBSs based on their gaussian distance and load similarity using Algorithm \ref{algo:algo1}. Then, the utilities of all players and social welfare of a given cluster is calculated for the current matching $\eta_c$. In the fourth phase, UEs and SNs update their respective utilities and individual preferences over one another. Subsequently, at each iteration, a chosen UE pair is swapped with a probability that depends on the change in the cluster's social welfare: a positive change in the social welfare of a cluster yields a probability of swapping larger then $1/2$ and vice-versa. As a result, the algorithm does not get caught in a local optimum and the algorithm continuously keeps track of the ``best'' matchings. Algorithm \ref{algo:algo2} terminates when no further improvement can be achieved. After phase IV, the algorithm converges to a local maximum of the social welfare for a given cluster. The Algorithm \ref{algo:algo1} continue until it reaches to stable matching.
\subsection{Convergence and Stability}
\label{sec:52}
The concept of peer effect and externalities requires us to adopt a new stability concept based on the idea of ``pairwise stability'' \cite{Eliz2011}. Before defining the pairwise stability, we first define a swap matching.
\begin{defn}
\label{def:2}
A \emph{swap matching} is formally defined as $\eta_c^{m \leftrightarrow \tilde{m}} = \{ \eta_c \setminus \{(p,m),(\tilde{p},\tilde{m}) \}\} \cup \{ (m,\tilde{p}),(\tilde{m},p)\}$. In each swap two UEs change their matching with their respective SNs while other matchings remain fixed. Having defined swap matching, we further define pairwise stability.
\end{defn}
\begin{defn}
\label{def:3}
Given a matching $\eta_c$, a pair of UEs $m,\tilde{m}$ and SNs $p, \tilde{p}$ within a cluster $c$, a pairwise matching is \emph{stable} if and only if there does not exist a pair of UEs $(m,\tilde{m})$ such that:
\begin{enumerate}[(i)]
  \item $\forall y \in \{m,\tilde{m}, p,\tilde{p} \}$, such that $U_{y,\eta_c^{m \leftrightarrow \tilde{m}}(y)}(\eta_c) \geq U_{y, \eta_c(y)}(\eta_c)$ and
  \item $\exists y \in \{m,\tilde{m},p,\tilde{p}\}\; U_{y,\eta_c^{m \leftrightarrow \tilde{m}}(y)}(\eta_c) > U_{y, \eta_c(y)}(\eta_c)$.
\end{enumerate}
\end{defn}
\noindent
A matching $\eta_c$ is said to be pairwise stable if there does not exist any UE $\tilde{m}$ or SN $\tilde{p}$, for which SN $p$ prefers UE $\tilde{m}$ over UE $m$ or any UE $m$ which prefers SN $\tilde{p}$ over $p$. From Definition \ref{def:3}, we can see that if two UEs swap between two SNs, the SNs involved in the swap must ``approve'' the swap. Similarly, if two SNs want to swap between two UEs, the UEs and SNs must agree to the swap. This definition is useful for proving the two-sided stability of our proposed matching.  Next, we will show that two-sided pairwise stable matching will always exist in our game. For that, we assume that neither UEs nor SNs can remain unmatched in the cluster, and we restrict ourselves to considering swap of UEs between SNs. However, before stating this result, we require the following Lemma:
\begin{lem}
\label{lem:1}
\emph{Any swap matching $(\eta_c^{m \leftrightarrow \tilde{m}})$ for which}
\begin{enumerate}[(i)]
  \item $\forall y \in \{m,\tilde{m}, p,\tilde{p} \}$, such that $U_{y,\eta_c^{m \leftrightarrow \tilde{m}}(y)}(\eta_c) \geq U_{y, \eta_c(y)}(\eta_c)$ and
  \item $\exists y \in \{m,\tilde{m},p,\tilde{p}\}\; U_{y,\eta_c^{m \leftrightarrow \tilde{m}}(y)}(\eta_c) > U_{y, \eta_c(y)}(\eta_c)$,
\end{enumerate}
yields $\Gamma_c(\eta_c^{m \leftrightarrow \tilde{m}}) > \Gamma_c(\eta_c)$ which can be written as $\Gamma_c(\eta_c^{\leftrightarrow}) > \Gamma_c(\eta)$.
\end{lem}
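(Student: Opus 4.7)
The plan is to express the change in the cluster social welfare $\Gamma_c(\eta_c^{\leftrightarrow}) - \Gamma_c(\eta_c)$ as a small telescoping sum of per-player utility increments, and then invoke conditions (i) and (ii) term by term. Because of the definition (\ref{eq:socialwelfare-cluster}) together with the fact that each UE is matched to exactly one SN, I would first rewrite the welfare in two equivalent ways: as a sum over UEs, $\Gamma_c(\eta_c) = \sum_{m\in\mathcal{M}_c} U_{\eta_c(m),m}(\eta_c)$, and as a sum over SNs, $\Gamma_c(\eta_c) = \sum_{p\in\mathcal{P}_c} U_p(\eta_c)$ using (\ref{eq:utility-scbs}). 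The first form directly ties the welfare to the utilities appearing in condition (i) at the UEs $m,\tilde m$; the second form ties it to the utilities appearing in condition (i) at the SNs $p,\tilde p$.

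I would then use the structure of a swap: $\eta_c^{m\leftrightarrow\tilde m}$ modifies only the links $(p,m)$ and $(\tilde p,\tilde m)$, turning them into $(\tilde p,m)$ and $(p,\tilde m)$, while all other links are preserved. The loads $|\mathcal{L}_p|$ and $|\mathcal{L}_{\tilde p}|$ stay unchanged because the swap is symmetric, and every SCBS continues to transmit with the same power. Hence, in the rate expressions (\ref{eq:rate-scbs-cu})--(\ref{eq:rate-d2d}), neither the bandwidth allocation factor $B/|\mathcal{L}_n|$ nor the interference denominators change for any UE outside $\{m,\tilde m\}$. Consequently, $U_{p'}(\eta_c^{\leftrightarrow})=U_{p'}(\eta_c)$ for every SN $p'\notin\{p,\tilde p\}$, so
\begin{align*}
\Gamma_c(\eta_c^{\leftrightarrow}) - \Gamma_c(\eta_c)
&= \bigl[U_p(\eta_c^{\leftrightarrow})-U_p(\eta_c)\bigr] + \bigl[U_{\tilde p}(\eta_c^{\leftrightarrow})-U_{\tilde p}(\eta_c)\bigr] \\
&= \bigl[U_{\tilde p,m}(\eta_c^{\leftrightarrow})-U_{p,m}(\eta_c)\bigr] + \bigl[U_{p,\tilde m}(\eta_c^{\leftrightarrow})-U_{\tilde p,\tilde m}(\eta_c)\bigr].
\end{align*}
Condition (i) applied at $y=p$ and $y=\tilde p$ (equivalently, at $y=m$ and $y=\tilde m$ via the identities just derived) ensures that each bracket is nonnegative; condition (ii) upgrades at least one of them to a strict inequality. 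Summing yields $\Gamma_c(\eta_c^{\leftrightarrow})>\Gamma_c(\eta_c)$, which is the claim.

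The main obstacle is justifying that the swap induces no utility change at any third-party UE via the externalities the paper explicitly emphasizes, namely interference and social peer effects. For the wireless externality this is handled by the symmetry argument above, since interference in (\ref{eq:rate-scbs-cu})--(\ref{eq:rate-scbs-imp}) depends only on transmit powers, not on which UE the SCBS is serving. The subtler point concerns the social term in (\ref{eq:utility-ue}): if $m$ or $\tilde m$ were an important UE, its D2D receivers could in principle see changed two-hop rates. I would handle this by noting that the second-hop rate in (\ref{eq:rate-scbs-imp}) is an intrinsic function of the important-UE channels and interference (not of the swap), while the first-hop rate change is already fully captured inside $U_{p,m}$ or $U_{\tilde p,\tilde m}$; thus the social contribution of any unaffected UE to $\Gamma_c$ still cancels, and the clean telescoping identity above remains valid, completing the argument.
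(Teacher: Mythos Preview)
Your route differs from the paper's and has a gap exactly where you flag it. The paper does \emph{not} try to show that third-party utilities are invariant under the swap; instead it lets the peer effects propagate and uses the symmetry of the social graph ($w_{m,\tilde m}=w_{\tilde m,m}$) to control them. Concretely, the paper first bounds the combined utility change of the swapping pair by a quantity $\delta_c>0$ obtained from conditions (i)--(ii), and then argues, via the symmetry of the social ties, that the induced change on all remaining UEs in the cluster equals $\delta_c$ again, so that the total UE-side increment is $\Delta_{\mathcal{M}_c}=2\delta_c>0$; adding the nonnegative SN-side increment $\Delta_{\mathcal{P}_c}\ge 0$ from condition (i) yields the conclusion. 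Your argument, by contrast, rests on the identity $U_{p'}(\eta_c^{\leftrightarrow})=U_{p'}(\eta_c)$ for every SN $p'\notin\{p,\tilde p\}$ and then applies (i)--(ii) only at the four named players.

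That identity fails in the important-UE case you single out. If the swapping UE $m$ is itself important, then $m$ is simultaneously a third-party SN $p'=m\notin\{p,\tilde p\}$, and each D2D receiver $m'\in\mathcal{M}_m$ contributes its own summand $\widetilde R_{n,m'}=\min(R_{n,m},R_{m,m'})$ to $\Gamma_c$ through (\ref{eq:rate-d2d}) and the third branch of (\ref{eq:utility-ue}). When $m$ changes SCBS the first-hop rate $R_{n,m}$ changes, hence so may every $\widetilde R_{n,m'}$; this change is \emph{not} ``already fully captured inside $U_{p,m}$,'' because those D2D UEs' utilities are separate terms in (\ref{eq:socialwelfare-cluster}), distinct from the social bonus that appears inside the important node's own utility in the first branch of (\ref{eq:utility-ue}). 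So your telescoping sum drops genuine terms, and conditions (i)--(ii) at $\{m,\tilde m,p,\tilde p\}$ no longer bound the full welfare increment. The paper's symmetry device is precisely what closes this hole: rather than claiming cancellation, it shows the aggregate third-party change reproduces $\delta_c$, so strict positivity survives.
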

\begin{proof}
The symmetry of social network and swap matching are key factors for guaranteeing pairwise stability. Moreover, the approved swap in the symmetry social network results in a Pareto improvements for the players involved in the swap, as clearly seen from the definition of pairwise stability. For all other players, a non-negative change in utility follows from the symmetry of the social-graph. We assume a UE centric matching in which UEs have preference over the SNs and not vice versa. Thus, for proving Lemma \ref{lem:1}, we consider only one-sided matching game rather than two-sided matching. In order to proof Lemma \ref{lem:1}, we start by calculating the difference in the social welfare for the swap matching $\eta_c^{\leftrightarrow}$ and given matching $\eta_c$. Without loss of generality, it is assumed that the swapping of UE $m$ strictly increases its utility. Define $\eta_c(m)=p$, and $\eta_c(\tilde{m})=\tilde{p}$, then let us start by calculating the change in the utility of UE $m$ which is given by:
\begin{equation}
\label{eq:proof1}
     0 < U_{p,m}(\eta_c) - U(\eta_c^{\leftrightarrow}) = \sum_{z \in \eta_c(\tilde{p})} U_{z,m} - U_{\tilde{m},m}  - \sum_{z \in \eta_c(p)} U_{z,m}.
\end{equation}
Similarly, for the change in utility for UE $\tilde{m}$, which is given by:
\begin{equation}
\label{eq:proof2}
    0 \leq U_{p,m}(\eta_c) - U(\eta_c^{\leftrightarrow}) = \sum_{z \in \eta_c(p)} U_{z,\tilde{m}} - U_{\tilde{m},m} - \sum_{z \in \eta_c(\tilde{p})} U_{z,\tilde{m}}.
\end{equation}
Adding the above inequalities (\ref{eq:proof1}), (\ref{eq:proof2}) we have:
\begin{multline}
\label{eq:proof3}
    0 < \sum_{z \in \eta_c(\tilde{p})} \bigg(U_{z,m} - U_{z,\tilde{m}}\bigg) \\
    + \sum_{z \in \eta_c(p)} \bigg( U_{z,m} - U_{z,\tilde{m}} \bigg) - 2U_{\tilde{m},m} := \delta_c.
\end{multline}
Consider a matching $\eta_c$ and a swap matching $\eta_c^{\leftrightarrow}$ that satisfies (i) and (ii) of Lemma \ref{lem:1}. The total change in the utility for all UEs $\mathcal{M}_c$ is:
\begin{align}
\label{eq:proof4}
     \bigtriangleup_{\mathcal{M}_c} &:= \sum_{z \in \mathcal{M}_c} U_{p,z}(\eta_c^{\leftrightarrow}) - \sum_{z \in \mathcal{M}_c} U_{p,z}(\eta_c) \nonumber \\
								    &:=\delta_c + \underbrace{ \sum_{z \in \eta_c(\tilde{p})} U_{z,m} - U_{z,\tilde{m}} }_\text{utility gain from $m$ associating $\tilde{p}$ } - \underbrace{ \sum_{z \in \eta_c(p)} U_{z,m} }_\text{utility loss from $m$ leaving $p$} \nonumber \\
								    & \quad +  \underbrace{ \sum_{z \in \eta_c(p)} U_{z,\tilde{m}} - U_{m,\tilde{m}} }_\text{utility gain from $\tilde{m}$ associating $p$}  - \underbrace{ \sum_{z \in \eta_c(\tilde{p})} U_{z,\tilde{m}} .}_\text{utility loss from $\tilde{m}$ leaving $\tilde{p}$}   \nonumber
\end{align}
\begin{align}
     & \bigtriangleup_{\mathcal{M}_c} := 2\delta_c > 0,
\end{align}
where (\ref{eq:proof4}) assumes that the social graph is symmetric. The total change in utility for all SNs $\mathcal{P}_c$ we have:
\begin{align}
\label{eq:proof5}
 & 0 \leq U_p(\eta_c^{\leftrightarrow}) - U_p (\eta_c) + U_{\tilde{p}}(\eta_c^{\leftrightarrow}) - U_{\tilde{p}}(\eta_c) := \bigtriangleup_{P_c}.
\end{align}
Without loss of generality, assume that UE $m$ strictly improves the utility while another UE $p$ either improves or is indifferent to the swap. It can be shown from (\ref{eq:proof5}) that the SNs $p$ and $\tilde{p}$ are affected by the swap with non-negative change in their utilities. Thus the social welfare strictly increases:
\begin{equation}
\label{eq:proof6}
			 \Gamma_c(\eta_c^{\leftrightarrow}) - \Gamma_c(\eta_c) = \bigtriangleup_{M_c} + \bigtriangleup_{P_c} > 0.
\end{equation}
\end{proof}
\noindent
Expanding on the idea presented in Lemma \ref{lem:1}, it is easy to prove the following theorem.
\begin{thm}
\label{thm:1}
 All local maxima of the social welfare for a cluster $c$ given in (\ref{eq:socialwelfare-cluster}) are two-sided pairwise stable.
\end{thm}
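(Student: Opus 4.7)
The plan is to prove Theorem~\ref{thm:1} by contraposition, leveraging Lemma~\ref{lem:1} directly. I would first fix the notion of neighborhood used to define ``local maximum'' in the discrete matching space: two matchings $\eta_c$ and $\eta_c'$ are neighbors if and only if $\eta_c' = \eta_c^{m \leftrightarrow \tilde{m}}$ for some pair $(m,\tilde{m}) \in \mathcal{M}_c^2$. With this topology, $\eta_c$ is a local maximum of $\Gamma_c$ exactly when no single swap operation strictly increases $\Gamma_c$.

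The core argument would then proceed as follows. Assume $\eta_c$ is a local maximum of $\Gamma_c(\cdot)$ as defined in \eqref{eq:socialwelfare-cluster}. Suppose, toward a contradiction, that $\eta_c$ is \emph{not} two-sided pairwise stable. By Definition~\ref{def:3}, there exist UEs $m,\tilde{m} \in \mathcal{M}_c$ and SNs $p=\eta_c(m)$, $\tilde{p}=\eta_c(\tilde{m})$ such that the swap matching $\eta_c^{m \leftrightarrow \tilde{m}}$ satisfies both condition (i) (weak Pareto improvement for the four players involved) and condition (ii) (strict improvement for at least one of them). These are precisely the hypotheses of Lemma~\ref{lem:1}, so I can immediately invoke it to conclude that $\Gamma_c(\eta_c^{m \leftrightarrow \tilde{m}}) > \Gamma_c(\eta_c)$.

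This conclusion, however, contradicts the assumption that $\eta_c$ is a local maximum, since $\eta_c^{m \leftrightarrow \tilde{m}}$ is a neighbor of $\eta_c$ in the swap topology and yields a strictly larger social welfare. Hence no such blocking pair $(m,\tilde{m})$ can exist, and $\eta_c$ must be two-sided pairwise stable.

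The proof is essentially a one-line corollary of Lemma~\ref{lem:1} once the neighborhood used for local optimality is matched with the swap operation used in Definition~\ref{def:3}. The main subtlety (and the only place where care is needed) is to argue that Lemma~\ref{lem:1} applies in the presence of externalities and peer effects: because the lemma was proved under the symmetry of the social graph $\mathcal{G}^s$ and the utilities in \eqref{eq:utility-ue}--\eqref{eq:utility-scbs} that already encode interference and social-tie contributions, the strict welfare gain $\Gamma_c(\eta_c^{\leftrightarrow}) > \Gamma_c(\eta_c)$ already accounts for all externality terms $\boldsymbol{\eta}_{-c}$. I would explicitly remark on this so that the contradiction is clean and the reader sees why externalities do not invalidate the local-to-global welfare comparison.
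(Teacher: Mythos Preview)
Your proposal is correct and follows essentially the same contradiction argument as the paper: assume a local maximum is not pairwise stable, obtain a blocking swap satisfying conditions (i)--(ii), invoke Lemma~\ref{lem:1} to get a strict increase in $\Gamma_c$, and contradict local optimality. Your version is in fact more careful than the paper's, since you explicitly fix the swap-neighborhood topology on the discrete matching space and note why the externality terms $\boldsymbol{\eta}_{-c}$ do not disrupt the comparison.
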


\begin{proof}
Let $\Gamma_c(\eta_c)$ be a local maximum of a given matching $\eta_c$. Lemma \ref{lem:1}, shows that any swap matching which is acceptable by all players satisfies conditions (i) and (ii), and strictly increases the social welfare of cluster $c$. Nevertheless this assumption is contradictory as $\eta_c$ is a local maximum for cluster $c$. Therefore, $\eta_c$ must be stable. It is worth nothing that, not all pairwise stable matchings are local maxima\footnote{This case can be considered when one player rejects a swap as its utility would decrease, but the other player gets benefit from such a swap. If forced swap happens, the total social welfare could increase, but only at the expense of the first player.} of $\Gamma_c(\eta_c)$.
\end{proof}
\begin{cor}
\label{cor:1}
The proposed Algorithm \ref{algo:algo2} is guaranteed to converge to a two-sided stable matching.
\end{cor}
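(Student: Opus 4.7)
The plan is to reduce the claim to a statement about the evolution of the per-cluster social welfare $\Gamma_c(\eta_c)$ under Algorithm \ref{algo:algo2}, and then invoke Theorem \ref{thm:1}. Concretely, Theorem \ref{thm:1} states that every local maximum of $\Gamma_c$ in the matching state space is two-sided pairwise stable, so it suffices to prove that Algorithm \ref{algo:algo2} terminates at (or tracks) such a local maximum for every cluster $c \in \mathcal{C}$.

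First I would observe that, within a cluster $c$, the set of admissible matchings $\eta_c : \mathcal{M}_c \rightarrow \mathcal{P}_c$ is finite, since $|\mathcal{M}_c|$ and $|\mathcal{P}_c|$ are finite. Hence the set of attainable values of $\Gamma_c(\cdot)$ is also finite and bounded above. Phase IV of the algorithm explores this finite state space through swap moves $\eta_c \to \eta_c^{m \leftrightarrow \tilde{m}}$ and maintains an auxiliary variable $\Gamma_c^{\text{best}}$ which is updated only when a strict improvement is observed, i.e., $\Gamma_c^{\text{best}} \leftarrow \Gamma_c(\eta_c^{\leftrightarrow})$ only if $\Gamma_c(\eta_c^{\leftrightarrow}) > \Gamma_c^{\text{best}}$. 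Consequently $\Gamma_c^{\text{best}}$ is a monotonically non-decreasing sequence taking values in a finite set, so it is eventually constant.

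Next I would connect the tracked ``best'' matching to pairwise stability. Here the key lever is Lemma \ref{lem:1}: any swap $\eta_c^{m \leftrightarrow \tilde{m}}$ that is acceptable to all four players involved (i.e., satisfies conditions (i)--(ii) of Definition \ref{def:3}) strictly increases $\Gamma_c$. Contrapositively, once $\Gamma_c^{\text{best}}$ stabilizes, no remaining swap can be simultaneously welfare-improving and mutually acceptable; in particular, at this fixed point no blocking pair exists, which is exactly the pairwise stability condition of Definition \ref{def:3}. Applying Theorem \ref{thm:1} at the terminal configuration then delivers two-sided pairwise stability for each cluster, and since clusters are disjoint by \eqref{eq:static-cluster}, the network-wide matching $\boldsymbol{\eta} = [\eta_1,\dots,\eta_{|\mathcal{C}|}]$ inherits stability cluster-by-cluster.

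The main obstacle is handling the probabilistic MCMC acceptance rule, which can occasionally accept swaps that decrease $\Gamma_c$; the argument must not rely on monotonicity of the current configuration, only on monotonicity of $\Gamma_c^{\text{best}}$. I would address this by emphasizing that the MCMC sampling is purely a device for escaping local traps, while convergence is certified by the ``best so far'' bookkeeping: because $\Gamma_c$ takes only finitely many values, $\Gamma_c^{\text{best}}$ reaches its supremum in finitely many improvement events, after which no mutually acceptable welfare-improving swap remains by Lemma \ref{lem:1}. A subtle point worth acknowledging, in the spirit of the footnote following Theorem \ref{thm:1}, is that pairwise stability is the appropriate terminal concept even though not every pairwise stable matching is a global maximum of $\Gamma_c$; this is exactly why the proof targets \emph{local} maxima and leverages Theorem \ref{thm:1} rather than attempting any global optimality claim.
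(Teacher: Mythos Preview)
Your proposal is correct and follows essentially the same route as the paper: finiteness of the per-cluster matching space, monotone improvement of the tracked social welfare, and the combination of Lemma~\ref{lem:1} with Theorem~\ref{thm:1} to conclude pairwise stability at termination. If anything, your treatment of the MCMC acceptance rule---arguing monotonicity of $\Gamma_c^{\text{best}}$ rather than of the current configuration---is more careful than the paper's own proof, which simply asserts that ``with each iteration the social welfare strictly improves.''
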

\begin{proof}
It can be shown from Lemma \ref{lem:1} and Theorem \ref{thm:1} that the algorithm converges to a stable matching, since with each iteration the social welfare strictly improves, and all local maxima of $\Gamma_c$ are stable matchings. All swaps among players must be agreed upon as given in Lemma \ref{lem:1}. Moreover, UEs have limited transmission range and can be matched with a finite number of SNs in their vicinity, therefore the possible swaps for the players are \emph{finite}. Every UE has a finite number of choices to swap so we have a finite set of matching for a given number of the SNs. In addition, considering all the possible swaps each UE is associated to its most preferred SN and vice versa. Algorithm \ref{algo:algo2} terminates, when no further improvement in social welfare is achieved by all possible swaps among players. Therefore, Algorithm \ref{algo:algo2} converges to a two-sided stable matching after a finite number of iterations.
\end{proof}
\subsection{Complexity Analysis}
In order to compute the complexity of the proposed algorithm, we start with the simple case in which the matching game has no social ties (peer effects) and UEs have strict preference ordering. As matching is done per cluster, we compute the complexity and message overhead for the matching $\eta_c$ within one cluster $c \in \mathcal{C}$. We assume that $\Phi_p$ is the maximum number of UEs matched to SN $p \in \mathcal{P}_c$ and $\Phi_c$ is the total number of UEs matched per cluster $\Phi_c = \sum_{\forall p \in \mathcal{P}_c} \Phi_p$ such that $\Phi_p = \Phi_{\tilde{p}}, p \neq \tilde{p}$. The value of $\Phi_p$ depends on the available bandwidth. Let $\Phi_s(\leq \Phi_c)$ denotes the number of satisfied matched UEs, which is based on the preference ordering. For simplicity, lets assume $\Phi_s$ is constant during all the iterations such that $\Phi_{s} = \Phi_{\tilde{s}}, \forall s \neq \tilde{s}$. To compute the complexity per cluster, we consider two worst case scenarios, when all UEs $m \in \mathcal{M}_c$ inside a cluster $c$ are matched to a single SN: (1) when the number of UEs inside the cluster are less then the total number of matched UEs i.e., $M_c \leq \Phi_c, |\mathcal{M}_c|= M_c$ and, (2) when the number of UEs is greater than the total number of matched UEs i.e., $M_c > \Phi_c$. Our goal is to analyze the maximum number of iterations required for convergence and the maximum number of proposals sent from UEs to SN (message overhead) for both cases. In each iteration $t$, UEs send proposal to their most preferred SN (i.e., important node or SCBS), and the SN accepts or rejects the received proposal based on its preference ordering and available capacity. Therefore, the number of matched but unsatisfied UEs at each iteration is less or equal than the available capacity.

For the first case, when the algorithm converges, all UEs are matched to a single SN, since, SNs prefer any UE to being unmatched. It can be observed that the worst case scenario happens, if all UEs have the same preference ordering. Thus, at the end of each iteration $t$ we have $M_c - \Phi_s t$ unsatisfied UEs. All UEs are matched and satisfied when the maximum number of iterations $t_{\text{max}}$ is obtained. i.e., $M_c- \Phi_s t_{\text{max}}=0$. Hence, the complexity is of order $\mathcal{O}(M_c)$. Moreover, we have $M_c - \Phi_s t$ proposal messages at each iteration $T$, and the total overhead for sending such messages is given by:
\begin{equation}
\label{eq:complexity-case-1}
    \xi_{\max}:= \sum_{t=1}^{t_{\text{max}}}(M_c - \Phi_s t + \Phi_s) = \frac{M_c(M_c + \Phi_s)}{2 \Phi_s}.
\end{equation}
For the second case when $M_c > \Phi_c$, once the algorithm converges, there are $M_c-\Phi_c$ unallocated UEs. The worst case happens, if all UEs have same preference ordering. Hence at $t_{\text{max}}$ iteration we have $M_c - \Phi_s t_{\text{max}}$ UEs unallocated. The complexity of the order $\mathcal{O}(\Phi_c)$, and the messaging overhead is equal to:
\begin{equation}
\label{eq:complexity-case-2}
    \xi_{\max}: = \sum_{t=1}^{\Phi_c} (M_c - \Phi_s t).
\end{equation}

The complexity of the proposed social network-aware algorithm will further depend on the social distance matrix $\boldsymbol{W}$ and the spectral clustering. The social distance matrix $\boldsymbol{W}$ is computed only once whereas, the spectral clustering algorithm runs for finite number of iterations.
\section{Simulation Results}
\label{sec:6}
We consider a single macro-cell in which UEs and SCBSs are uniformly distributed over the area of interest. Transmissions are affected by distance dependent path loss according to 3GPP specifications \cite{3GPP}. We assume that there is no power control, and thus the power is uniformly divided between UEs. It is also assumed that the bandwidth $B$ is divided equally between the served UEs. For simulation, we assume that one UE is selected as important UE per SCN. The simulation parameters are given in Table \ref{tab:simulation}. The position of UEs is assumed to be static, distance dependent path loss model for D2D communication of LOS and NLOS $103.8 + 20.9\log_{10}(d[km])$, $145.4+37.5 \log_{10}(d[km])$ respectively, is considered. Furthermore, for the social network, the selection of important UE is based on static social information which is collected during the network setup phase. We use a common full-buffer traffic model for all UEs in our simulations. The performance of the social-aware approach is compared with the baseline classical association approaches (i.e., max-RSSI (single time slot) and random association). In the random association, important UEs are chosen randomly and UEs are randomly associated to SN within their D2D coverage radius. For the proposed social-aware UE-association with clustering approach a dynamic clustering method is used, in which the number of clusters dynamically changes. Moreover, all statistical results are averaged over a large number of independent runs and high dense network deployment.
\begin{table}[t]
	\caption{Simulation Parameters}
	\centering
	\begin{tabular}{| p{5.2cm} | p{2.4cm} |}
		\hline
		\textbf{Parameter} & \textbf{Value} \\ \hline
		Bandwidth (MHz)   & 5  \\ \hline          
		Area ($m^2$)    & 500 \\ \hline
		Noise power spectral density $N_0$ \cite{3GPP} (dBm/Hz) & -174 \\ \hline
		SCBS, D2D transmission radius (m)    & 50, 20  \\ \hline
		SCBS, UE transmission powers (dBm)   & 23, 15 \\ \hline 
		Tunable parameters,  $\alpha $, $\beta$    & 0.5 \\ \hline
		Inter-site distance (m) & $40$ \\ \hline
		$\tau_0$ such that $\tau_0 + \tau_1 = 1$ & $0.84$ \\ \hline
		Impact of load similarity $\sigma_l$ & 1 \\ \hline
		Impact of neighborhood size (distance) $\sigma_d$  & $100$ \\ \hline
		Parameter controls the impact of distance and load on similarity $\Omega$ & 0.5 \\ \hline
		$k_{\textrm{min}}$, $k_{\textrm{max}}$ parameters for clustering   & $2$,  $ \lceil (N/2)+1 \rceil$ \\ \hline
		cluster radius (m)  & $200$ \\ \hline
		Boltzman temperature $\vartheta$  & $\vartheta = 1 - \big( \frac{count}{count_{\textrm{max}}} \big)$ \\
		\hline
	\end{tabular}
	\label{tab:simulation}
\end{table}
\begin{figure}[t]%
	\centering%
	\includegraphics[width = \columnwidth]{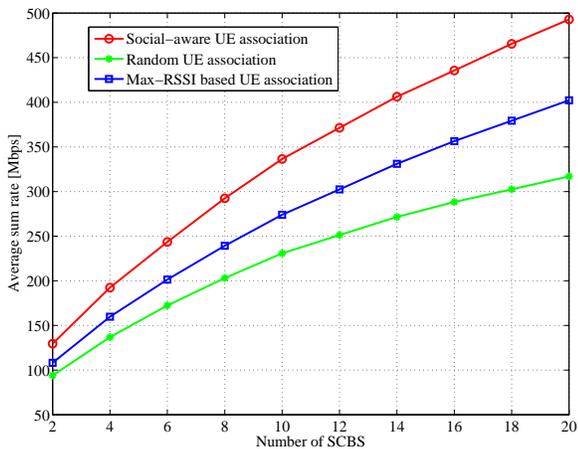}
	\vspace{-0.45cm}
	\caption{Average sum rate for a fixed number of UEs ($M=10$) per SCBS, for the proposed and baseline approaches.}
	\label{fig:avg-sum-rate-vs-scbs}
\end{figure}%
\vspace{-0.3cm}
\subsection{Impact of SCBS Density}
Fig. \ref{fig:avg-sum-rate-vs-scbs} shows the average sum rate as a function of the density of SCBSs $N$, and fixed number of UEs per SCBS $M=10$. Fig. \ref{fig:avg-sum-rate-vs-scbs} clearly shows that, in the proposed social-aware approach, user association improves the sum rate. In particular, we can see that, as the number of SCBSs increases, the average sum rate increases. This is due to the fact that, an increase in the number of SCBSs increases the number of important UEs, hence UEs associate with an important UE or SCBS based on their respective utility. Fig. \ref{fig:avg-sum-rate-vs-scbs}, also shows that the performance gains in terms of sum rate for the proposed social-aware association approach increases when the number of SCBSs increases in the system. We further note that the proposed social-aware approach yields significant performance gains for all network size, reaching up to $23\%$ over the max-RSSI based approach and $56\%$ over the random UE association approach.
\begin{figure}[t]%
	\centering%
	\includegraphics[width = \columnwidth]{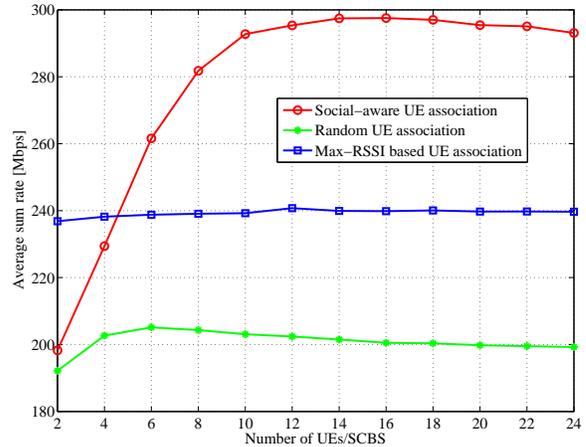}
	\vspace{-0.45cm}
	\caption{Average sum rate for a fixed number of SCBS ($N = 8$), under the considered approaches.}
	\label{fig:avg-sum-rate-vs-ue}
\end{figure}%
\begin{figure}[t]%
	\centering%
	\includegraphics[width = \columnwidth]{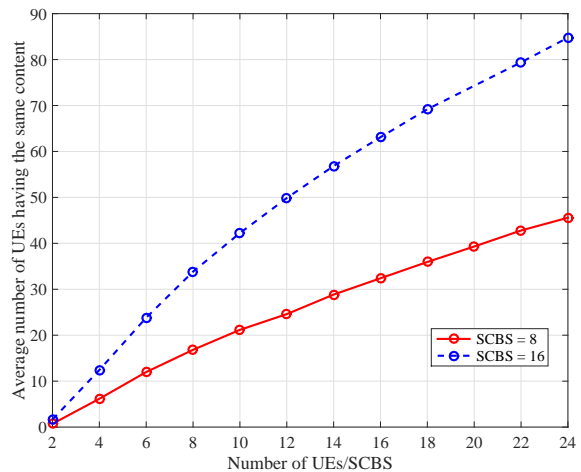}
	\vspace{-0.45cm}
	\caption{Average number of UEs having the same content versus density of UEs.}
	\label{fig:same-content}
\end{figure}%
\subsection{Impact of UE Density Per SCBS}
\begin{figure*}[t]
	\centering
	\begin{minipage}[t] {0.495\linewidth}
		\includegraphics[width=\columnwidth]{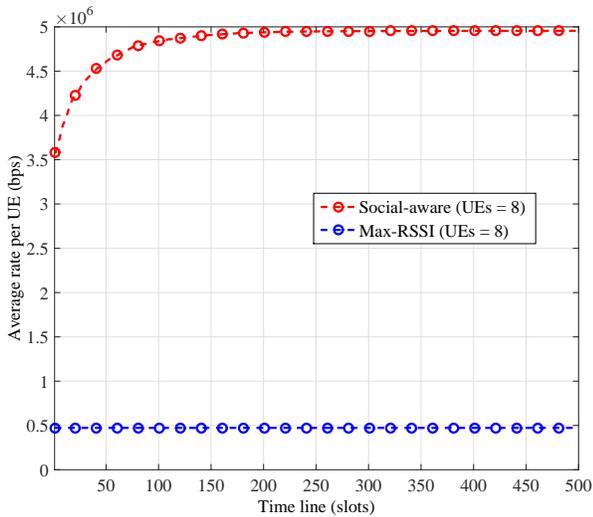}
	\end{minipage}
	\begin{minipage}[t]{0.495\linewidth}
		\includegraphics[width=\columnwidth]{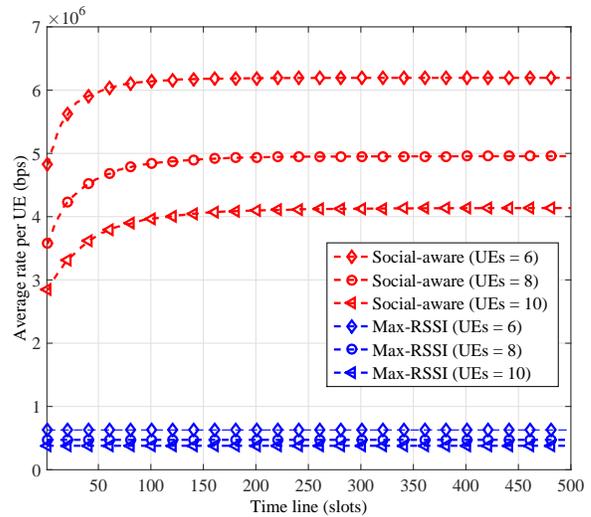}
	\end{minipage}
	\vspace{-0.8cm}
	\caption{Average achievable rate per UE for fixed number of SCBSs ($N=8$), under the considered approaches.}
	\label{fig:time-line-rate}
	\vspace{-0.6cm}
\end{figure*}
Fig. \ref{fig:avg-sum-rate-vs-ue} shows the average sum rate for a fixed number of SCBSs $N = 8$ and varying density of UEs. It is worth to mention that our proposed approach is suited for dense networks where large number of UEs per SCBS are deployed. It can be seen from the figure that, there is notable performance gain in terms of average sum rate for the proposed approach as compared to random and max-RSSI approach for varying number of UEs from $6$ to $24$ per cell. Moreover, it is also noted that in random UE association, UEs are associated to any SN within vicinity without consideration of RSSI and social-ties between UEs. From Fig. \ref{fig:avg-sum-rate-vs-ue}, we can also observe that, by increasing the number of UEs, the average sum rate increases up to $48\%$ and $24\%$ compared to random UE association and max-RSSI, respectively. Furthermore, we can see that with the fewer number of UEs per SCBS such as $2$ and $4$ UEs/SCBS the max-RSSI association approach outperforms over the proposed approach. This is due to the under utilization of the second time-slot $\tau_1$ if no D2D links are formed in neighboring SCBSs.\footnote{Note that the results are averaged over multiple realizations.}

Fig. \ref{fig:same-content} shows the average number of UEs having the same content for a fixed number of SCBSs $N = 8$ and $N=16$ with the variant density of UEs. It can be seen from the figure that, the number of UEs having the same content (i.e., size of social network) increases as the density of UEs increases.

Fig. \ref{fig:time-line-rate} shows the change in the average achievable rate per UE under the considered approaches. In order to examine the data rate per UE, we fixed the number of SCBSs $N=8$ and varied the number of UEs per SCBS. The average rate per UE is constant over the number of time slots in case of max-RSSI approach (single time slot). The achievable UE rate varies as a result of its association (matching) to SN (SCBS, or important UE). It can be observed from Fig. \ref{fig:time-line-rate}, that as we increases the number of UEs per SCBS, more time slots are required to achieve higher rate per UE. This is due to the fact that, with the increase in the number of UEs per SCBS, more D2D links can be exploited. Therefore, as the number of UEs increase in the system more time slots are required to find the suitable UE-SN association.
\vspace{-0.3cm}
\subsection{Cell Edge Performance for Fixed Number of UEs Per SCBS}
\begin{figure}[t]%
	\centering%
	\includegraphics[width = \columnwidth]{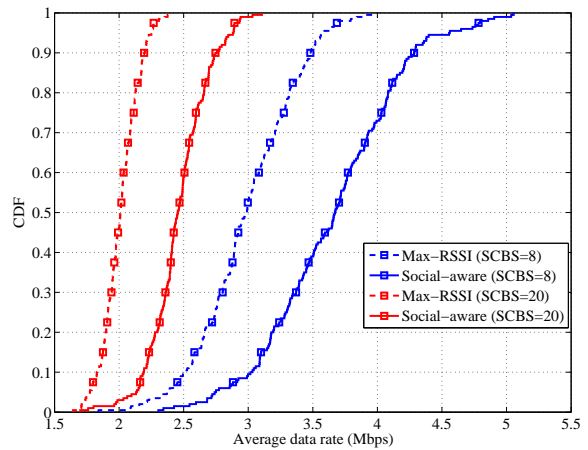}
	\vspace{-0.8cm}
	\caption{Cumulative density function of UE's rate for fixed UEs per SCBS ($M=10$), SCBSs ($N = 8$) and ($N = 20$) under the considered approaches.}
	\label{fig:cdf-scbs}
\end{figure}%

Fig. \ref{fig:cdf-scbs} shows the cumulative density function of UE's data rate for $M=10$ UEs per SCBS and different number of SCBS i.e., $N=8$ and $N=20$. In order to examine the gains in the UE's rate we analyze different percentile of user throughput. It can be shown from the figure that there is an increase in the UE data rate relative to the social unaware association approach. Table \ref{tab:cdf-scbs} shows the different percentiles of UE throughput for different for different density of SCBSs. We can see that, for $N=8$ SCBSs, our proposed social-aware user association outperforms the social-unaware user association (max-RSSI) approach by up to $21\%$, $22\%$ and $28\%$ for $5$-th, $50$-th and $95$-th percentiles of user throughput, respectively. Fig. \ref{fig:cdf-scbs} shows that, in case of $N=20$ the social-aware user association approach shows significant gains compared to social-unaware approach (max-RSSI) in terms of average data rate up to $16\%$, $23\%$ and $27\%$ for $5$-th, $50$-th and $95$-th percentiles of user throughput, respectively.
\begin{table}[!t]
	\caption{Percentiles of UE throughputs}
	\centering
	\begin{tabular}{| p{5.0cm} | p{0.7cm} | p{0.7cm} | p{0.7cm} |}
		\hline
		\textbf{Percentiles of UE throughputs} & \textbf{5-th} & \textbf{50-th} & \textbf{95-th} \\ \hline
		Average data rate gain as compare to max-RSSI approach ($N=8$)  & $21\%$  & $22\%$  & $28\%$  \\ \hline
		Average data rate gain as compare to max-RSSI approach ($N=20$) & $16\%$  & $23\%$  & $27\%$  \\ \hline
	\end{tabular}
	\label{tab:cdf-scbs}
\end{table}
\begin{figure}[t]%
	\centering%
	\includegraphics[width = \columnwidth]{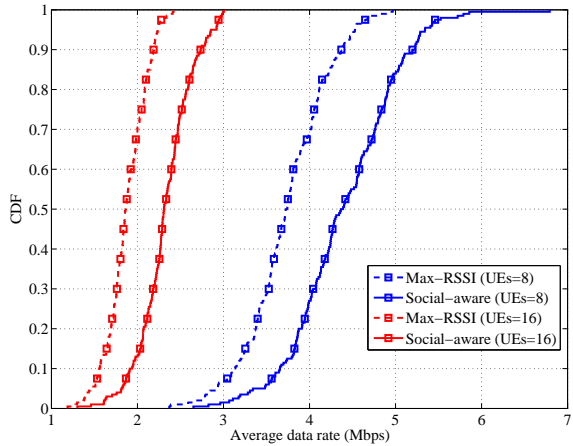}
	\vspace{-0.8cm}
	\caption{Cumulative density function of UE's rate for fixed SCBSs ($N=8$), UEs per SCBS ($M = 8$) and ($M = 16$) under the considered approaches.}
	\label{fig:cdf-ue}
\end{figure}%
\vspace{-0.3cm}
\subsection{Cell Edge Performance for Fixed Number of SCBS}
Fig. \ref{fig:cdf-ue} shows the cumulative density function of UE's rate for $N=8$ SCBSs and different sets of UEs per SCBS i.e., $M=8$ and $M=16$. In order to examine the gains in the UE's rate with different density of UEs, we analyze different percentiles of user throughput in terms of achievable data rate. Fig. \ref{fig:cdf-ue} shows that there is an increase in the data rate relative to the social unaware association approach. Table \ref{tab:cdf-ue} shows the percentiles of the UE throughput for different for different density of UEs and fixed SCBSs. In Fig. \ref{fig:cdf-ue}, we can see that, for $M=8$ UEs per SCBS, our proposed social-aware user association outperforms the social-unaware user association (max-RSSI) approach by up to $26\%$, $24\%$ and $31\%$ for $5$-th, $50$-th and $95$-th percentiles of user throughput, respectively. Furthermore, in case of $M=16$ UEs per SCBS, the social-aware user association approach shows significant gains compared to social-unaware approach (max-RSSI) in terms of data rate up to $19\%$, $18\%$ and $18\%$ for $5$-th, $50$-th and $95$-th percentiles of user throughput, respectively.

\begin{table}[!t]
	\caption{Percentiles of UE throughputs}
	\centering
	\begin{tabular}{| p{5.0cm} | p{0.7cm} | p{0.7cm} | p{0.7cm} |}
		\hline
		\textbf{Percentiles of UE throughputs} & \textbf{5-th} & \textbf{50-th} & \textbf{95-th} \\ \hline
		Average data rate gain as compare to max-RSSI approach ($M=8$)  & $26\%$  & $24\%$  & $31\%$  \\ \hline
		Average data rate gain as compare to max-RSSI approach ($M=16$) & $19\%$  & $18\%$  & $18\%$  \\ \hline
	\end{tabular}
	\label{tab:cdf-ue}
\end{table}

\begin{figure}[t]%
\centering%
\includegraphics[width = \columnwidth]{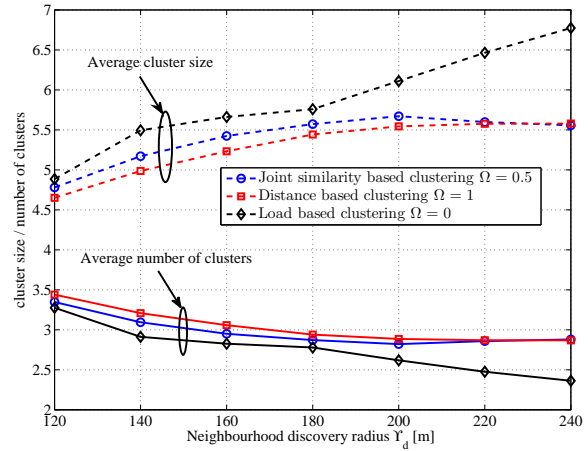}
\vspace{-0.4cm}
\caption{Comparison of average number of clusters and average cluster size with different similarities for the fixed number of UEs $M=20$ per SCBS, and $N = 16$ SCBSs.}
\label{fig:cluster-size-scbs}
\end{figure}%

\begin{figure}[t]%
\centering%
\includegraphics[width = \columnwidth]{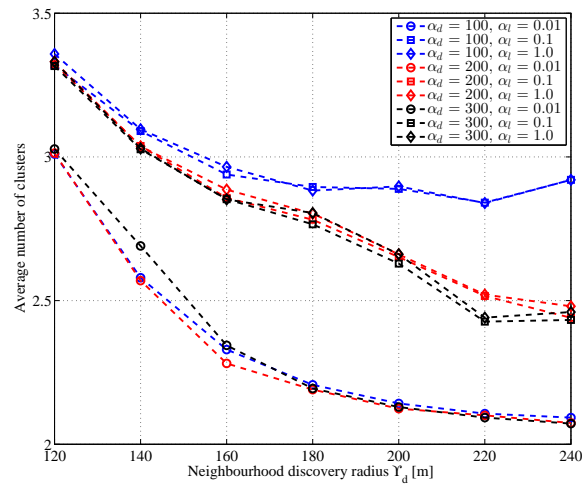}
\vspace{-0.4cm}
\caption{Average number of clusters with different $\alpha_d$ and $\alpha_l$ values for fixed number of UEs $M=20$ per SCBS, and $N=16$ SCBSs.}
\label{fig:avg-alpha-impact}
\end{figure}%
\vspace{-0.3cm}
\subsection{Impact of Similarity-based Clustering}
In Fig. \ref{fig:cluster-size-scbs}, we present the average number of clusters and the average cluster sizes of SCBSs for various approaches. We fix the number of UEs $M=20$ per SCBS and SCBSs $N=16$, with the various neighborhood discovery range $\Upsilon_d$ from $120$m to $240$m. Fig. \ref{fig:cluster-size-scbs} demonstrates the impact of distance, load, and joint similarity on the coordination of SCBSs to form clusters as per in (\ref{eq:gaussiancluster}). For the joint similarity, $\Omega$ is set to $0.5$. As per (\ref{eq:distanceSim})-(\ref{eq:gaussiancluster}), it can be shown that as the distance increases, all edges have non-zero weight between SCBSs increases. Therefore, clustering based on the distance similarity allows to group more SCBSs together yielding less average number of clusters with larger average cluster size. The increase of cluster size directly influences on the cluster load, while clustering based on the joint similarity which takes into account distance and load similarities to form clusters.
\begin{figure}[t]%
\centering%
\includegraphics[width = \columnwidth]{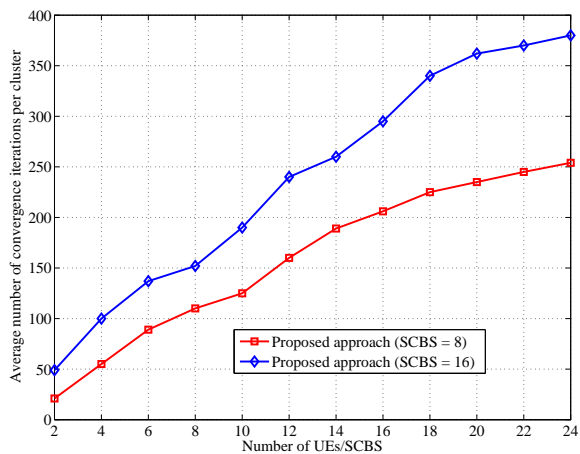}
\vspace{-0.4cm}
\caption{Average number of iterations as a function of the number of the SCBSs N and fixed number of UEs per SCBS (M=10), under the proposed approach.}
\label{fig:avg-iterations-ue}
\end{figure}%

Fig. \ref{fig:avg-alpha-impact} shows the effect of $\sigma_d$ and $\sigma_l$ on SCBS clustering. For this result, we use joint similarity based clustering such that $\Omega$ is set to $0.5$. The number of UEs per SCBS $M=20$ and number of SCBSs $N=16$ are fixed with the variation in the neighborhood discovery $\Upsilon_d$ ranging from $120m$ to $240m$. It can be shown that, by varying $\alpha_d$ the number of clusters decreases for a fixed value of $\alpha_l$ when the neighborhood discovery radius is under $160m$. It is worth mentioning that the range of the Gaussian distance similarity for any two connected SCBS is $[e^{-\Upsilon_d / 2\sigma_d^2},1]$. Thus, as the distance similarity increases, SCBSs come closer and more likely to cooperate. For a fixed value of $\alpha_d$ and varying $\alpha_l$ it can be observed that, the average number of clusters increases as $\alpha_l$ increases within the load dissimilarity range given as $[1 , e^{1/2\sigma_l^2}]$.

Fig. \ref{fig:avg-iterations-ue} shows the average number of iterations required to reach a stable matching as a function of a fixed number SCBSs $N=8$ and $N=16$ and varying number of UEs per SCBS $M$ . The average number of iterations are calculated over all the SCBSs clusters, which are the average number of iterations required per cluster to get converge. In the figure, we can observe that, as the number of UEs increases, the average number of iterations increases due to the increase in the number of players in the system. From Fig. \ref{fig:avg-iterations-ue} we can also observe that, the proposed social-aware approach requires reasonable number of iterations for convergence.
\section{Conclusions}
\label{sec:7}
In this paper, we proposed a novel, social network-aware approach for user association in D2D underlaid small cell base stations. We formulated the problem as a matching game with externalities in which the goal of each cluster of SCBSs is to maximize the social welfare which captures the data rates and peer effect due to social ties among nodes. A dynamic clustering approach is introduced to cluster base stations based on their distance and load similarities. In the proposed matching game, each UEs and SNs build their preferences and self-organize in their respective cluster as to choose their own utilities and achieve two-sided pairwise stable matching. To solve the game, we proposed social network-aware algorithm, in which UEs and SNs reach a stable matching in a reasonable number of simulation iterations. Simulations results have shown that the proposed social network-aware approach provides considerable gains in terms of increased data rates with respect to a classical social-unaware approaches.

\end{document}